\newtheorem{theorem}{Theorem}[section]
\newtheorem{lemma}[theorem]{Lemma}
\newtheorem{proposition}[theorem]{Proposition}
\newtheorem{definition}[theorem]{Definition}
\newtheorem{remark}[theorem]{Remark}
\newtheorem{observation}[theorem]{Observation}
\newcommand\bF{\mathbb F} \newcommand\bP{\mathbb P}
\newcommand\bK{\mathbb K} \newcommand{\bbsm}{\left (\begin{smallmatrix}}      \newcommand{\besm}{\end{smallmatrix}\right )}
\newcommand\beq{\begin{equation}}  \newcommand\eeq{\end{equation}}
\newcommand\beqs{\begin{equation*}}  \newcommand\eeqs{\end{equation*}}
\newcommand\mC{\mathcal C}  \newcommand\mG{\mathcal G} \newcommand\mP{\mathcal P}
\title{Counting generalized Reed--Solomon codes}
\author{Peter Beelen, David Glynn, Tom H\o holdt, and Krishna Kaipa}
\subjclass{Primary: 94B27, 11T27; Secondary 51E21.}
\keywords{Generalized Reed--Solomon codes, MDS codes, $n$-arcs.}
\begin{document}

\maketitle

\centerline{\scshape Peter Beelen}
\medskip
{\footnotesize
 \centerline{Department of Applied Mathematics and Computer Science,}
   \centerline{Technical University of Denmark}
} 

\medskip

\centerline{\scshape David Glynn}
\medskip
{\footnotesize
 \centerline{School of Computer Science, Engineering and Mathematics,}
   \centerline{Flinders University Australia}
   \centerline{King Abdulaziz University, Jeddah, Saudi Arabia}
} 

\medskip

\centerline{\scshape Tom H\o holdt}
\medskip
{\footnotesize
 \centerline{Department of Applied Mathematics and Computer Science,}
   \centerline{Technical University of Denmark}
}

\medskip

\centerline{\scshape Krishna Kaipa}
\medskip
{\footnotesize
 \centerline{Department of Mathematics,}
   \centerline{IISER Pune}
}

\bigskip

\begin{abstract}
In this article we count the number of $[n,k]$ generalized Reed--Solomon (GRS) codes, including the codes coming from a non-degenerate conic plus nucleus. We compare our results with known formulae for the number of $[n,3]$ MDS codes with $n=6,7,8,9$.
\end{abstract}

\section{Introduction}

At the AGCT-India-2013 conference the fourth author presented a paper: ``An asymptotic formula in $q$ for the number of $[n,k]$ $q$-ary MDS codes'', extending previous work by S.R.~Ghorpade and G.~Lachaud \cite{GL}. A natural question is: ``How many of the MDS codes are generalized Reed--Solomon (GRS) codes?''. An MDS code is a certain type of linear $[n,k,d]_q$-code of length $n$, dimension $k$, and mimimum Hamming distance $d$, with symbols coming from the Galois field $\bF_q$. ``MDS'' stands for ``maximum distance separable''; that is, the Singleton bound $d\le n-k+1$ is met so that for MDS codes, $d = n-k+1$. Every linear code over $\bF_q$ with parameters $[n,k,d]_q$ has a $k\times n$ generator matrix $A$  over $\bF_q$, so that the codewords are the vectors in the row-space of $A$. Being MDS is equivalent to $A$ having every maximal $k\times k$ submatrix nonsingular; see \cite{Glynn1,singleton}. There are a number of ways of looking at and defining generalized Reed--Solomon (GRS) codes: e.g. they can be defined by the evaluations of certain polynomials. Here we take the point of view that a GRS code over $\bF_q$ of length $n\le q+1$ and dimension $k$ is defined by some generator matrix, the columns of which are a set of $n$ distinct points on a normal rational curve of degree $k-1$ in $\bP^{k-1}(\bF_q)$, the projective space of dimension $k-1$ over $\bF_q$. If $q$ is a power of two, we define a GRS code of dimension $3$ and length $q+2$ by a generator matrix, the columns of which are a set of $q+2$ distinct points on a non-degenerate conic plus its nucleus. For a definition of the nucleus of a conic, see \cite[Def.~1.28]{HKT} or \cite{Glynn2} for more general ways to construct nuclei algebraically. Thus the dual of the GRS $[q+2,3]$ code is MDS with parameters $[n,k]=[q+2,q-1]$, and also considered to be of GRS type.

Since the GRS type is preserved under duality, codes and their duals are paired. Hence the number of codes of GRS type for $[n,k]$ is the same for the dual parameters $[n,n-k]$. For dimensions $k=1$ and $k=n-1$ it is easy to see that any MDS code is a GRS code and that the number of such codes equals $(q-1)^{n-1}$: for $k=1$ the code is generated by a general vector of all non-zeros with first entry $1$. For $k=n-1$ the code is the dual of that code. For other values of $k$ the number of $[n,k]$ GRS codes appears not to be known, so in this paper we address this question and prove that for $4 \le k+2 \le n \le q+1$ the number of generalized Reed--Solomon codes is $$(q-1)^{n-1}(q-2)\cdots (q-n+2)$$ and that for $q=2^e$, with $e \ge 3$ the number of $[q+2,3]$ GRS codes is $$(q+2)(q-1)^{q+1}(q-2)!$$
We then compare the number of GRS codes with the number of MDS codes for small length and dimension three and observe some phenomena regarding the ratio between these numbers. Lastly, we give an explanation of these observations.

\section{Counting GRS codes of length up to $q+1$}

The main theorem we wish to prove in this section is the following:

\begin{theorem}\label{thm:countGRS}
Let $\mathbb{F}_q$ be the finite field with $q$ elements and choose natural numbers $k,n$ such that $4 \le k+2 \le n \le q+1$. Then the number $\gamma_{GRS}(k,n)$ of distinct GRS codes over $\mathbb{F}_q$ of length $n$ and dimension $k$ is equal to $$\gamma_{GRS}(k,n)=(q-1)^{n-1}\cdot (q-2) \cdots (q-n+2).$$
\end{theorem}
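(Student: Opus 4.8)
The plan is to realise every GRS code as the row space of one standard matrix and then to count orbits of a group action. Fix the standard normal rational curve $\Gamma=\{\mathbf v(a):a\in\bP^1(\bF_q)\}$ with $\mathbf v([s:t])=(s^{k-1},s^{k-2}t,\dots,t^{k-1})$. Since any two normal rational curves of degree $k-1$ in $\bP^{k-1}(\bF_q)$ are projectively equivalent, every GRS generator matrix has the form $gVD$, where $g\in\mathrm{GL}_k(\bF_q)$, the columns of $V$ are $\mathbf v(a_1),\dots,\mathbf v(a_n)$ for distinct nodes $a_i\in\bP^1(\bF_q)$, and $D=\mathrm{diag}(v_1,\dots,v_n)$ with $v_i\in\bF_q^{*}$. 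Left multiplication by $g$ does not change the row space, so every GRS code is the row space of some $VD$. This reduces the theorem to counting the distinct row spaces of $VD$ as the data $\bigl((a_i),(v_i)\bigr)$ range over their allowed values.

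The core step is a rigidity statement, which I expect to be the main obstacle. Two data give the same code exactly when $V'D'=MVD$ for some $M\in\mathrm{GL}_k(\bF_q)$, i.e.\ $M\mathbf v(a_i)$ is proportional to $\mathbf v(a_i')$ for all $i$. Here the hypothesis $k+2\le n$ enters decisively: a normal rational curve of degree $k-1$ in $\bP^{k-1}$ is the unique such curve through any $k+2$ points in general position, so since $M(\Gamma)$ is a normal rational curve sharing the $n\ge k+2$ points $\mathbf v(a_i')\in\Gamma$ with $\Gamma$ itself, one must have $M(\Gamma)=\Gamma$. Thus $M$ lies in $\mathrm{Stab}(\Gamma)$, which I would identify with $\mathrm{PGL}_2(\bF_q)$ (the collineation fixing a frame of $k+1$ arc points being the identity makes this map injective), so $M$ induces a Möbius map $\phi$ with $a_i'=\phi(a_i)$. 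Writing $M_\phi$ for the projectively unique realisation of $\phi$ and $M=cM_\phi$, and setting $M_\phi\mathbf v(a)=\lambda_\phi(a)\mathbf v(\phi(a))$, this yields $v_i'=c\,\lambda_\phi(a_i)\,v_i$. Establishing the uniqueness of the curve through $k+2$ general points together with $\mathrm{Stab}(\Gamma)\cong\mathrm{PGL}_2(\bF_q)$ — the latter requiring some care in small characteristic — is the delicate part.

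Consequently the code-equivalence is precisely the action of $H=\mathrm{PGL}_2(\bF_q)\times\bF_q^{*}$ on the set of data by $(\phi,c)\cdot\bigl((a_i),(v_i)\bigr)=\bigl((\phi(a_i)),(c\,\lambda_\phi(a_i)v_i)\bigr)$, and I would check that this action is free: if $(\phi,c)$ fixes a datum then $\phi$ fixes the $n\ge 4$ distinct nodes $a_i$, forcing $\phi=\mathrm{id}$ since a nontrivial Möbius map fixes at most two points, whence $\lambda_\phi\equiv 1$ and then $c=1$. The number of codes is therefore the number of data divided by $|H|$. There are $(q+1)q\cdots(q+2-n)=(q+1)!/(q+1-n)!$ ordered choices of distinct nodes (this is where $n\le q+1$ is used) and $(q-1)^n$ choices of multipliers, while $|H|=|\mathrm{PGL}_2(\bF_q)|\cdot(q-1)=(q+1)q(q-1)(q-1)$, giving
\[
\gamma_{GRS}(k,n)=\frac{\bigl((q+1)!/(q+1-n)!\bigr)\,(q-1)^n}{(q+1)q(q-1)(q-1)}=(q-1)^{n-1}(q-2)(q-3)\cdots(q-n+2),
\]
which is the asserted value.
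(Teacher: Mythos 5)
Your proposal is correct and follows essentially the same route as the paper: realize every GRS code as the row space of a matrix $VD$ built on the standard normal rational curve, use the uniqueness of the NRC through $k+2$ points in general position to show that row-equivalence between such matrices is induced exactly by the (preimage in $GL_k$ of the) stabilizer of the curve, a group of order $|PGL_2(\mathbb{F}_q)|\cdot(q-1)$ acting freely, and divide. The two facts you flag as the delicate part --- uniqueness of the NRC through $k+2$ general points and $\mathrm{Aut}$ of the NRC being exactly $\rho'(PGL_2(\mathbb{F}_q))$ --- are precisely Theorems \ref{Castelnuovo} and \ref{aut_rnc}, which the paper proves from scratch for arbitrary fields but which are citable from Hirschfeld in the finite-field case you need.
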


Our proof will use elements from group theory and finite geometry. While primarily working over $\mathbb{F}_q$, the finite field with $q$ elements, several of the statements on group theory are valid more generally. Therefore throughout this section $\bK$ denotes an arbitrary field. The notation $\bK^{\times}$ denotes the multiplicative group of non-zero elements of $\bK$. Let $c(t)$ denote the column vector $(1, t, t^2, \dots, t^{k-1})^T$ if $t \neq \infty$ and let $c(\infty) = (0, 0, \cdots, 0, 1)^T$. We consider $k \times n$ matrices that are the product of two matrices as follows:
\beq \label{eq:gen_mat}
   G_k(t, d) = \left[ c(t_1) \,c(t_2) \,\dots \,c(t_n) \right]\, \text{diag}(d_1, d_2, \cdots, d_n)
\eeq
where $t_1, \dots, t_n$ are $n$ distinct points on $\bP^1 = \bK\cup \{\infty\}$ and   $d_i \in \bK^{\times}$. Here $t$ and $d$ denote the tuples $(t_1,\dots,t_n)$ and $(d_1,\dots,d_n)$.
Let $S_{k,n}$ denote the set of  matrices of the form \eqref{eq:gen_mat}. We define an equivalence relation on $S_{k,n}$ as follows:
\[ G_k(t,d) \sim G_k(\tau,\delta) \text{ if there exists }  A \in GL_k(\bK) \text{ such that }  A \, G_k(t,d) = G_k(\tau,\delta)\]
In other words two  matrices of the form \eqref{eq:gen_mat} are equivalent if and only if they are row equivalent.

The following result is crucial:
\begin{theorem}  \label{main_thm}
    If $n \geq k+2$, then each $\sim$-equivalence class in $S_{k,n}$ is in bijective correspondence with a group $\mG$ which is a particular central extension of $PGL_2(\bK)$ by $\bK^{\times}$.
\end{theorem}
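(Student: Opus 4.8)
The plan is to make both the equivalence class and the group $\mG$ concrete inside $GL_k(\bK)$, and to reduce the whole statement to a rigidity property of the rational normal curve $\mC = \{[c(t)] : t \in \bP^1\} \subset \bP^{k-1}(\bK)$ of degree $k-1$, where $[c(t)]$ denotes the projective point spanned by $c(t)$. First I would fix one matrix $G_0 = G_k(t^0,d^0)$ in the given class and record two elementary facts. The $k\times k$ Vandermonde minors show that $G_0$ has full row rank $k$, so $A \mapsto A\,G_0$ is injective on $GL_k(\bK)$; and $A\,G_0$ lies in $S_{k,n}$ precisely when $A$ sends each point $[c(t^0_i)]$ of $\mC$ back onto $\mC$ (the resulting parameters $\tau_i$ are then automatically distinct, since $A$ is invertible). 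Hence $A \mapsto A\,G_0$ is a bijection from
$$H := \{\, A \in GL_k(\bK) : A[c(t^0_i)] \in \mC \ \text{for}\ i = 1,\dots,n \,\}$$
onto the $\sim$-equivalence class of $G_0$, and it suffices to identify $H$ with the group $\mG$.

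Next I would introduce the genuine stabilizer $N = \{A \in GL_k(\bK) : A(\mC) = \mC\}$, which is visibly a subgroup of $H$, and show that $N$ is the advertised central extension. Sending $A \in N$ to the automorphism it induces on $\mC \cong \bP^1$ defines a homomorphism $\rho\colon N \to PGL_2(\bK)$. It is surjective, because the symmetric-power (Veronese) representation $\mathrm{Sym}^{k-1}\colon GL_2(\bK) \to GL_k(\bK)$ lifts every M\"obius transformation to an element of $N$; and its kernel consists of those $A$ fixing every point of $\mC$, which a short eigenvector argument (the $c(t)$ are eigenvectors, and $k+1$ of them lie in general position) forces to be the scalars $\bK^{\times} I$. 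Since the scalars are central, $N$ is a central extension $1 \to \bK^{\times} \to N \to PGL_2(\bK) \to 1$; explicitly $N$ is generated by $\mathrm{Sym}^{k-1}(GL_2(\bK))$ together with all scalar matrices, and is isomorphic to $(GL_2(\bK) \times \bK^{\times})/\bK^{\times}$. This is the group $\mG$, and it depends only on $k$, which is consistent with the theorem asserting the same $\mG$ for every class.

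The crux, and the only place the hypothesis $n \geq k+2$ enters, is the reverse inclusion $H \subseteq N$: any $A$ sending the $n$ chosen points of $\mC$ into $\mC$ must in fact preserve all of $\mC$. Here I would argue geometrically. The $n$ points $[c(t^0_i)]$ lie in general position, since a hyperplane meets $\mC$ in at most $k-1$ points and so no $k$ of them are on a hyperplane; meanwhile $A(\mC)$ is again a rational normal curve of degree $k-1$ passing through the $n \geq k+2 = (k-1)+3$ points $A[c(t^0_i)]$, all of which also lie on $\mC$. By the classical uniqueness of the rational normal curve through $(k-1)+3$ points in general position, the curves $\mC$ and $A(\mC)$ coincide, whence $A \in N$. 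Combining the three steps yields the bijection class $\leftrightarrow H = N = \mG$.

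I expect this rigidity step to be the main obstacle. For $n = k+1$ it genuinely fails: a projectivity can carry $k+1$ frame points of $\mC$ to $k+1$ frame points of $\mC$ without preserving $\mC$, so $H \supsetneq N$ and the argument must exploit the extra $(k+2)$nd point. The delicate point is to make the uniqueness statement valid over an \emph{arbitrary} field $\bK$ rather than only over an algebraically closed one. If a self-contained, field-agnostic argument is preferred, I would instead use $N$ to normalize three of the points to $0,1,\infty$, reducing to the case where $A$ fixes $[c(0)],[c(1)],[c(\infty)]$, and then compute directly from the remaining incidence constraints that $A$ must be scalar; this replaces the cited geometric fact by an explicit, if more laborious, linear-algebra verification.
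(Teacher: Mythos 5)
Your proposal follows essentially the same route as the paper's proof: the equivalence class of a fixed $G_0$ is identified with the set $H$ of matrices carrying the $n$ marked points of $\mC_k$ back into $\mC_k$; the hypothesis $n\ge k+2$ is spent exactly where the paper spends it, namely on the uniqueness of the NRC through $k+2$ points in general position (Theorem~\ref{Castelnuovo}, which the paper proves over arbitrary fields via Lemma~\ref{grs_dual}, precisely because the ``classical'' versions are stated for algebraically closed or finite fields); and the group is recognized as the preimage in $GL_k(\bK)$ of the stabilizer of $\mC_k$. Your injectivity argument (right-invertibility of $G_0$, from the Vandermonde minors) is a slightly cleaner substitute for the paper's freeness argument, which instead notes that the induced M\"obius map fixes $n\ge 3$ points of $\bP^1$.

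The one step you pass over too quickly is the identification of $N=\{A: A(\mC_k)=\mC_k\}$ with the advertised central extension. You assert that sending $A\in N$ to the induced permutation of $\bP^1\cong\mC_k$ ``defines a homomorphism $N\to PGL_2(\bK)$,'' but over an arbitrary (in particular finite) field it is not automatic that a linear map preserving the \emph{point set} $\mC_k(\bK)$ induces a M\"obius transformation of the parameters; your eigenvector argument only handles the case where that permutation is trivial. This well-definedness is exactly the content of the paper's Theorem~\ref{aut_rnc} ($\mathrm{Aut}(\mC_k)=\rho'(PGL_2(\bK))$), and its proof requires the same Castelnuovo rigidity again: after composing with an element of $\rho'(PGL_2(\bK))$ one may assume $A$ fixes three parameters, the uniqueness argument of Theorem~\ref{Castelnuovo} then forces $A$ to fix every marked point of $\mC_k$, and only then does the eigenvalue/multiset argument show $A$ is scalar. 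Your own suggested fallback (normalize $0,1,\infty$ and verify scalarity directly) is precisely this repair, so the gap is fixable with tools you already name --- you just need to apply them to the structure of $N$, not only to the inclusion $H\subseteq N$.
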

The group $\mG$ equals $\pi^{-1}(\rho'(PGL_2(\bK))$ where $\pi:GL_2(\bK) \to PGL_2(\bK)$ is the natural quotient homomorphism, and where $\rho': PGL_2(\bK) \to PGL_k(\bK)$ is a homomorphism defined in \eqref{eq:rho'_def}.
When $\bK$ is the field $\mathbb{F}_q$, elements of $S_{k,n}$ correspond to generator matrices of $q$-ary GRS codes of length $n$ and dimension $k$.
It is also clear that:
 \beq \label{eq:num_gen_mat}
  |S_{k,n}(\mathbb{F}_q)| = (q+1)q(q-1)\;  [(q-2)(q-3) \cdots(q-n+2)] \; (q-1)^n.
\eeq
Moreover two elements of $S_{k,n}$ are in the same $\sim$ equivalence class if and only if they generate the same code. When $\bK = \mathbb{F}_q$, any central extension of $PGL_2(\mathbb{F}_q)$ by $\mathbb{F}_q^{\times}$ has cardinality $|PGL_2(\mathbb{F}_q)| \cdot |\mathbb{F}_q^{\times}|= (q+1)q(q-1)^2$.
Therefore Theorem \ref{thm:countGRS} is a direct consequence of Theorem \ref{main_thm}, which is why we now focus on proving the latter theorem.

We start with  a lemma about the matrices $G_k(t,d)$. The proof is familiar in coding theory (at least when no $t_i = \infty$)  from the result that the dual of a GRS code is itself GRS.
\begin{lemma} \label{grs_dual}
  For $\delta = (\delta_1, \dots, \delta_n)$ as defined below,  the rows of $G_{n-k}(t,\delta)$ form a basis for the null space of $G_k(t,d)$.
    \beq    \label{eq:delta}
                 \delta_i  = \begin{cases} d_i^{-1} \cdot \displaystyle\prod_{ \{ j: j \neq i, t_j \neq \infty\} } (t_i  - t_j)^{-1} &\mbox{if }\,  t_i \neq \infty \\
                 - d_i^{-1} \cdot\displaystyle\sum_{j \neq i}          t_j^{n-2} d_j \delta_j  &\mbox{if }\, t_i = \infty. \end{cases}
    \eeq
\end{lemma}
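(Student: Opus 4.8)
The plan is to verify directly that $G_k(t,d)\,G_{n-k}(t,\delta)^T$ is the zero $k\times(n-k)$ matrix, and then to settle a dimension count. Writing $a\in\{0,\dots,k-1\}$ and $b\in\{0,\dots,n-k-1\}$ for the row indices of the two factors, the $(a,b)$ entry of the product is $\sum_{i=1}^n [G_k]_{a,i}\,[G_{n-k}]_{b,i}$. An index $i$ with $t_i\neq\infty$ contributes $t_i^{\,a+b}d_i\delta_i$, while an index with $t_i=\infty$ (there is at most one, since the $t_i$ are distinct) contributes only to the single entry $(a,b)=(k-1,n-k-1)$, because $c(\infty)$ has its one nonzero coordinate in the last row of each factor. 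Setting $e_i:=d_i\delta_i$ for $t_i\neq\infty$, the first case of \eqref{eq:delta} gives $e_i=\prod_{\{j:\,j\neq i,\,t_j\neq\infty\}}(t_i-t_j)^{-1}$, so everything reduces to evaluating the power sums $\sum_{i:\,t_i\neq\infty} t_i^{\,m}e_i$ for $m=a+b\in\{0,\dots,n-2\}$.

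The heart of the argument is the classical Lagrange/divided-difference identity: for distinct nodes, $\sum_i f(t_i)/\prod_{j\neq i}(t_i-t_j)$ is the top coefficient of the interpolating polynomial of $f$, hence vanishes whenever $f$ has degree strictly less than (number of nodes) minus one. I would first treat the case in which no $t_i=\infty$. Then there are $n$ nodes, each $e_i$ is exactly the Lagrange weight, and for $f(x)=x^m$ with $0\le m\le n-2$ the relevant quantity is the coefficient of $x^{n-1}$ in a polynomial of degree $m<n-1$, which is $0$. Since every pair $(a,b)$ gives $m=a+b\le n-2$, all entries vanish and the product is zero.

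Next I would incorporate a node at infinity, say $t_{i_0}=\infty$. For $i\neq i_0$ the product defining $e_i$ automatically omits $j=i_0$, so the $e_i$ are the Lagrange weights for the $n-1$ \emph{finite} nodes; the same identity now yields $\sum_{i\neq i_0} t_i^{\,m}e_i=0$ for $0\le m\le n-3$ and $\sum_{i\neq i_0} t_i^{\,n-2}e_i=1$. Every entry with $a+b<n-2$ therefore vanishes, while the single entry $(k-1,n-k-1)$, where $a+b=n-2$, equals $\bigl(\sum_{i\neq i_0} t_i^{\,n-2}e_i\bigr)+d_{i_0}\delta_{i_0}=1+d_{i_0}\delta_{i_0}$. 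This is exactly where the second case of \eqref{eq:delta} is needed: it is defined precisely so that $d_{i_0}\delta_{i_0}=-\sum_{j\neq i_0} t_j^{\,n-2}d_j\delta_j=-1$, forcing this last entry to vanish as well. Hence the row space of $G_{n-k}(t,\delta)$ lies in the null space of $G_k(t,d)$.

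It remains to see that the $n-k$ rows of $G_{n-k}(t,\delta)$ are independent, and so form a basis. The computations above show each $\delta_i\neq 0$ (in the infinite case $d_{i_0}\delta_{i_0}=-1$), so $G_{n-k}(t,\delta)$ has the same rank as the Vandermonde-type matrix $[\,c(t_1)\cdots c(t_n)\,]$ built from the degree-$(n-k-1)$ normal rational curve; any $n-k$ of its columns are linearly independent, giving full row rank $n-k$. Likewise $G_k(t,d)$ has rank $k$, so its null space has dimension $n-k$, and the $n-k$ independent rows span it. I expect the only genuine obstacle to be the bookkeeping at $\infty$: recognizing that it touches just the corner entry $(k-1,n-k-1)$ and that the second branch of \eqref{eq:delta} is the unique correction making that entry cancel; the finite-node part is the familiar self-duality computation for GRS codes.
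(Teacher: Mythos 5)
Your proof is correct and follows essentially the same route as the paper: the finite-node entries vanish by the classical Lagrange interpolation (divided-difference) identity, the node at infinity only disturbs the corner entry, which the second branch of \eqref{eq:delta} is designed to cancel, and a rank/dimension count upgrades ``contained in the null space'' to ``basis.'' The only cosmetic difference is that you compute the power sums $\sum_i t_i^{a+b}d_i\delta_i$ entry by entry (and explicitly evaluate $\sum_{i\neq i_0}t_i^{\,n-2}e_i=1$, hence $d_{i_0}\delta_{i_0}=-1$), whereas the paper extracts the vanishing coefficients from the Lagrange form of $f$ and handles $\infty$ by deleting a row and column to reduce to the finite case; both are the standard GRS self-duality computation.
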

\begin{proof}
First assume no $t_i  = \infty$. In this case the assertion  about $G_{n-k}(t,\delta)$ is equivalent to the assertion that
$\sum_i f(t_i) d_i \delta_i t_i^j  = 0$ for $0 \leq j \leq n-k-1$ for all polynomials $f \in \bK[t]$ of degree at most $k-1$. This in turn is equivalent to the fact that the coefficients of $t^{\ell}$ for $k \leq \ell \leq n-1$ in the following Lagrange interpolation form of the polynomial $f(t)$ vanish:
\[ f(t) = \sum_{i=1}^n f(t_i) d_i \delta_i \textstyle\prod_{j \neq i} (t - t_j). \]
In general, suppose $t_{\ell} = \infty$.  Let $G'$ be the $(k-1) \times (n-1)$ matrix obtained by deleting the last row and the $\ell$-th column of  $G_k(t,d)$. Let $G''$ be the $(n-k) \times (n-1)$ matrix obtained by deleting the $\ell$-th column of $G_{n-k}(t,d)$. It follows from the above discussion  that the rows of $G''$ are in the null space of $G'$. Already this implies that except the last row, the remaining rows of $G_{n-k}(t,\delta)$ are in the null space of $G_k(t,d)$. As regards the last row of $G_{n-k}(t,\delta)$, the choice of $\delta_{\ell}$ made in the hypothesis ensures that this row is also in the null space of  $G_{k}(t,d)$. Since the row space of $G_{n-k}(t,\delta)$ and the null space of $G_k(t,d)$ are $n-k$ dimensional, we have shown that the rows of $G_{n-k}(t,\delta)$ form a basis for the null space of $G_k(t,d)$.
\end{proof}

Next we recall the definition of a normal rational curve and we collect some  properties of this curve that we will need.  We use the notation $[x_0,x_1,\dots,x_{k-1}]$ for homogeneous coordinates of a point in $\bP^{k-1}$.
\begin{definition}
  The normal rational curve $\mC_k$ is the image $\epsilon_k(\bP^1)$ of the map  $\epsilon_k: \bP^1 \to \bP^{k-1}$ given by
  \beq \label{eq:rnc_def}
   \epsilon_k[1,t] =  [1,t,t^2, \dots,t^{k-1}] \quad \text{ i.e. } \;\epsilon_k [x,y] = [x^{k-1}, x^{k-2}y, \dots,  x y^{k-2}, y^{k-1}]
\eeq
 In general, for any $A \in PGL_k(\bK)$, the image of $A(\mC_k)$  is also called a normal rational curve.
   In the  sequel, we abbreviate the term normal rational curve as NRC.
   Throughout this work, we make the assumption that $k < |\bK|$. Under this assumption, any NRC has at least $k+2$ points.
\end{definition}

We  also need the intrinsic definition of a NRC. Let $V = \bK^m$.
Given $f_1, \dots, f_{k-1} \in V^*$ and an element $v \in V$, the map $(f_1, \dots, f_{k-1}) \mapsto f_1(v)f_2(v) \dots f_{k-1}(v)$ is symmetric and multilinear in $f_1, \dots, f_{k-1}$ and hence defines a map $V \to  \text{Sym}^{k-1}(V^*)^*$. The projectivization of this map is the $(k-1)$-th Veronese embedding
\beq \label{eq:NRC_intr} \eta_k: \bP(V) \to  \bP\left( (\text{Sym}^{k-1}(V^*))^* \right). \eeq
As pointed out in \cite[p.25, p.101]{Harris} this is the correct intrinsic description of the Veronese embedding in positive characteristic. In positive characteristic
there is no natural isomorphism between $(\text{Sym}^{k-1}(V^*))^*$ and $\text{Sym}^{k-1}(V)$. There is a graded algebra analogous to the graded algebra Sym$(V)$ (and isomorphic to it in characteristic zero) known as the divided power algebra of $V$, whose $k$-th graded component is $(\text{Sym}^{k-1}(V^*))^*$.  (See for example \cite[pp.566,587--588]{Eisenbud}).
When dim$(V)=2$, the image of the map \eqref{eq:NRC_intr} is the NRC. To see this, let $\{e, f\}$ denote the standard basis of $V=\bK^2$ and let $\{E, F\}$ denote the associated dual basis of $V^*$. The corresponding basis for Sym$^{k-1}(V^*)$ is $\{E^{k-i}F^{i-1} \, : \, i = 1 \cdots k\}$. Corresponding to this basis of Sym$^{k-1}(V^*)$  let $\{b_1, \dots, b_k\}$ denote the dual basis of $(\text{Sym}^{k-1}(V^*))^*$ (defined by $b_j(E^{k-i}F^{i-1}) = \delta_{ij}$ where $\delta_{ij}$ is the Kronecker delta function).
In terms of homogeneous coordinates with respect to the bases
$\{e,f\}$ and $\{b_1, \dots, b_k\}$ the map $\eta_k$ is described by
\beq \label{eq:eta_def}\eta_k(x e + y f) = (x^{k-1}, x^{k-2}y, \dots,  x y^{k-2}, y^{k-1}) \eeq
which is the same as \eqref{eq:rnc_def}.\bigskip

Let $V = \bK^2$. Let $\{e,f\}$ be the standard basis for $V$, and let  $\{e^{k-i} f^{i-1} \,:\, i=1\dots k\}$
, $\{E^{k-i} F^{i-1} \,:\, i=1\dots k\}$, and $\{b_1, \dots, b_k\}$ be the associated  bases for Sym$^{k-1}(V)$, Sym$^{k-1}(V^*)$ and $(\text{Sym}^{k-1}(V^*))^*$ as defined above. The group  $GL(V)$ acts on Sym$^{k-1} (V)$ by the $(k-1)$-th symmetric power representation $g \cdot v_1 v_2 \dots v_{k-1} =   gv_1  gv_2 \dots gv_{k-1}$. Let $\rho(g) \in GL_k(\bK)$ be the matrix representing this action with respect to the basis  $\{e^{k-i} f^{i-1} \,:\, i=1\dots k\}$.  Similarly $GL(V)$ acts on Sym$^{k-1}(V^*)$ by   $g \cdot  f_1 f_2 \dots f_{k-1} =   (f_1 \circ g)  (f_2 \circ g)   \dots (f_{k-1} \circ g)$.  The matrix of this action with respect to the basis  $\{E^{k-i} F^{i-1} \,:\, i=1\dots k\}$ is $\rho(g^{-t})$ where $g^{-t}$ is the inverse transpose of $g \in GL_2(\bK)$. It follows that the induced action of $GL(V)$ on $(\text{Sym}^{k-1}(V^*))^*$ has matrix $(\rho(g^{-t}))^{-t} = \rho(g^t)^t \in GL_k(K)$. We define:
\beq  \label{eq:rho'_def}
       \rho': PGL_2(\bK) \to PGL_k(\bK), \; \text{ given by } \; \rho'([g]) = [ \rho(g^t)^t)]
\eeq
where $[h]$ denotes the class in $PGL_m(\bK)$ of $h \in GL_m(\bK)$.
It follows from \eqref{eq:NRC_intr} and \eqref{eq:eta_def} that
$\rho'([g]) \cdot \epsilon_k [v]= \epsilon_k ( g \cdot [v])$. In other words:

\begin{proposition} \label{equivariance}
 For $g = \bbsm \alpha & \beta \\ \gamma & \delta \besm$,
\beq \label{eq:mobius}
   \rho'([g]) \cdot [1,t,\dots,t^{k-1}] = [1,\tau,\dots,\tau^{k-1}], \; \text{ where } \; \tau = \frac{\gamma + \delta t}{\alpha+\beta t}
  \eeq
 \end{proposition}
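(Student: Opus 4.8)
The plan is to obtain Proposition~\ref{equivariance} as the coordinate incarnation of the $GL(V)$-equivariance of the Veronese map, and then to compute the resulting Möbius transformation explicitly. The conceptual input is that $\eta_k$ is functorial: a linear map $g\in GL(V)$ induces compatible maps on $V^*$, on $\mathrm{Sym}^{k-1}(V^*)$, and on its dual $(\mathrm{Sym}^{k-1}(V^*))^*$, and since $\eta_k$ is defined purely in terms of these functors (see \eqref{eq:NRC_intr}), it intertwines the action of $g$ on source and target.

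First I would make the equivariance precise. By definition $\eta_k([v])$ is the class of the functional on $\mathrm{Sym}^{k-1}(V^*)$ sending $f_1\cdots f_{k-1}$ to $f_1(v)\cdots f_{k-1}(v)$. The single identity that drives everything is $f_i(gv)=(f_i\circ g)(v)$, which gives
\[ \eta_k(gv)(f_1\cdots f_{k-1}) = \prod_{i=1}^{k-1}(f_i\circ g)(v) = \eta_k(v)\bigl(\sigma(g)\,(f_1\cdots f_{k-1})\bigr), \]
where $\sigma(g)$ denotes the action of $g$ on $\mathrm{Sym}^{k-1}(V^*)$. Thus $\eta_k(gv)$ is the image of $\eta_k(v)$ under the map of $(\mathrm{Sym}^{k-1}(V^*))^*$ induced by $\sigma(g)$. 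By the matrix computation recorded just before \eqref{eq:rho'_def}, this induced map has matrix $\rho(g^t)^t$ with respect to the dual basis $\{b_1,\dots,b_k\}$; projectivizing, it is exactly $\rho'([g])$. Hence $\rho'([g])\cdot\eta_k([v]) = \eta_k([gv])$.

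Next, invoking the identification $\eta_k=\epsilon_k$ from \eqref{eq:eta_def}, the equivariance becomes $\rho'([g])\cdot\epsilon_k[v] = \epsilon_k(g\cdot[v])$. It then suffices to evaluate both sides at $[v]=[1,t]$, i.e.\ $v=e+tf$. With $g=\bbsm\alpha&\beta\\\gamma&\delta\besm$ one has $ge=\alpha e+\gamma f$ and $gf=\beta e+\delta f$, so
\[ gv=(\alpha+\beta t)\,e+(\gamma+\delta t)\,f, \qquad g\cdot[1,t]=[\alpha+\beta t,\ \gamma+\delta t]=[1,\tau], \]
with $\tau=(\gamma+\delta t)/(\alpha+\beta t)$. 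Since $\epsilon_k[1,t]=[1,t,\dots,t^{k-1}]$ and $\epsilon_k[1,\tau]=[1,\tau,\dots,\tau^{k-1}]$, substitution into the equivariance yields precisely \eqref{eq:mobius}.

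The step I expect to be delicate is the bookkeeping of dual-space conventions in the second paragraph: one must check that the map induced on $(\mathrm{Sym}^{k-1}(V^*))^*$ by the identity $f_i(gv)=(f_i\circ g)(v)$ is genuinely the one whose matrix is $\rho(g^t)^t$ --- that is, that the correct transpose versus inverse-transpose is taken at each of the two dualizations --- so that it matches the definition \eqref{eq:rho'_def} of $\rho'$ and not a competing variant. Everything after this identification is a direct substitution.
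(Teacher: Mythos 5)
Your proposal is correct and is essentially the paper's own argument: the paper gives no separate proof of Proposition~\ref{equivariance}, instead deriving it directly from the discussion preceding \eqref{eq:rho'_def} via the statement that $\rho'([g])\cdot\epsilon_k[v]=\epsilon_k(g\cdot[v])$, which is exactly the equivariance you establish and then evaluate at $[v]=[1,t]$. Your write-up simply makes explicit the functoriality step and the final Möbius computation that the paper leaves implicit.
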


As an example, we see that for $k=2$, and $g  = \bbsm \alpha & \beta \\ \gamma & \delta \besm$,
  \[ \rho'([g]) = \left[ \bbsm \alpha^2 & 2 \alpha \beta & \beta^2\\  \alpha \gamma & \alpha \delta + \gamma \beta &   \beta \delta \\ \gamma^2 & 2 \gamma \delta & \delta^2 \besm \right],
\]
and $\rho'([g])[1,t,t^2] = [1,\tau,\tau^2]$ for $\tau = (\gamma + \delta t)/(\alpha+\beta t)$.

\begin{proposition} \label{mono}
		$ \rho': PGL_2(\bK) \to PGL_k(\bK)$ is a monomorphism.
\end{proposition}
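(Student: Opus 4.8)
The plan is to prove that $\rho'$ is a monomorphism by exploiting the equivariance recorded in Proposition~\ref{equivariance}, which translates the action of $\rho'([g])$ on the NRC into the M\"obius action of $[g]$ on $\bP^1$. That $\rho'$ is a homomorphism is already built into its construction: since $\rho$ is a representation and $(g_1g_2)^t = g_2^t g_1^t$, one checks directly that $g \mapsto \rho(g^t)^t$ respects products, and since $\rho$ sends a scalar matrix to a scalar matrix the induced map $\rho'$ on $PGL_2(\bK)$ is well defined. Hence being a monomorphism amounts to showing that $\ker\rho'$ is trivial, and this is the only real content.

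First I would take $[g] \in PGL_2(\bK)$ with $\rho'([g]) = [I]$, the identity of $PGL_k(\bK)$. Then $\rho'([g])$ fixes every point of $\bP^{k-1}$, in particular every point of the normal rational curve $\mC_k = \epsilon_k(\bP^1)$. Applying Proposition~\ref{equivariance}, for every $[v] \in \bP^1$ we obtain
\[
  \epsilon_k(g \cdot [v]) = \rho'([g]) \cdot \epsilon_k[v] = \epsilon_k[v].
\]
Because $\epsilon_k$ is injective --- the ratios of the coordinates of $[1,t,\dots,t^{k-1}]$ recover $t$, while $\epsilon_k(\infty) = [0,\dots,0,1]$ is the unique point of $\mC_k$ with vanishing first coordinate --- this forces $g \cdot [v] = [v]$ for all $[v] \in \bP^1$. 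In other words, the M\"obius transformation associated to $[g]$ fixes $\bP^1(\bK)$ pointwise.

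It then remains to observe that a M\"obius transformation fixing sufficiently many points must be the identity. Writing $\tau = (\gamma + \delta t)/(\alpha + \beta t)$ as in \eqref{eq:mobius}, the fixed-point condition $\tau = t$ becomes the polynomial equation $\beta t^2 + (\alpha - \delta)t - \gamma = 0$, which has at most two roots in $\bK$ unless the polynomial vanishes identically. The standing assumption $k < |\bK|$ gives $|\bP^1(\bK)| = |\bK| + 1 \geq k + 2 \geq 4 > 2$, so the polynomial must be identically zero; hence $\beta = \gamma = 0$ and $\alpha = \delta$, i.e.\ $g$ is scalar and $[g] = [I]$. Therefore $\ker\rho'$ is trivial and $\rho'$ is a monomorphism.

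The only delicate point is the dependence on the field. A purely eigenvalue-based argument, comparing $\rho(g^t)^t$ directly to a scalar, would have to rule out nontrivial unipotent $g$, which is awkward in small positive characteristic (exactly the regime where the intrinsic Veronese description of the preceding paragraphs is needed). Routing the argument through the fixed points of the induced M\"obius map avoids this uniformly, and the hypothesis $k < |\bK|$ is precisely what guarantees enough points on $\bP^1$ to conclude.
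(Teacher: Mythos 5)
Your proof is correct and takes essentially the same route as the paper: both reduce the statement via Proposition \ref{equivariance} to the fact that a M\"obius transformation fixing all of $\bP^1(\bK)$ must be the identity of $PGL_2(\bK)$. Your version merely spells out the injectivity of $\epsilon_k$ and the fixed-point polynomial argument that the paper leaves implicit.
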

\begin{proof}
Let $g  = \bbsm \alpha & \beta \\ \gamma & \delta \besm \in GL_2(\bK)$ with $[g] \in \text{ker}(\rho')$. Using Proposition \ref{equivariance}, we see that $t \mapsto  \tfrac{\gamma + \delta t}{\alpha+\beta t}$, i.e.  $[g]$ is the identity element of $PGL_2(\bK)$.
\end{proof}
The next result is well known if $\bK$ is algebraically closed (for example  \cite[\S4.3 p.530]{GH}) as well as if $\bK$ is  finite  (see \cite[Theorem 21.1.1 (v)]{Hirschfeld1985}).
\begin{theorem} \label{Castelnuovo}
  Let $\bK$ be an arbitrary field. A collection of  $k+2$ points in general position in $\bP^{k-1}$ lie on a unique NRC.
\end{theorem}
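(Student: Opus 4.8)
The plan is to prove existence and uniqueness together by first moving the configuration into a normal form and then writing down every NRC through the relevant points in an explicit rational (partial-fraction) parametrization. First I would use the standard fact that $PGL_k(\bK)$ acts simply transitively on frames (ordered $(k+1)$-tuples of points in general position) to assume, after applying a suitable $A\in PGL_k(\bK)$, that $P_1=[e_1],\dots,P_k=[e_k]$ are the coordinate points and $P_{k+1}=[1,\dots,1]$. Writing the last point as $P_{k+2}=[\beta_1,\dots,\beta_k]$, I would translate the general-position hypothesis into concrete conditions on the $\beta_i$: evaluating the relevant $k\times k$ determinants, independence of $\{P_{k+2}\}\cup\{[e_j]:j\neq i\}$ forces every $\beta_i\neq 0$, while independence of $\{P_{k+1},P_{k+2}\}\cup\{[e_j]:j\neq p,q\}$ forces $\beta_p\neq\beta_q$. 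Thus in normal form the hypothesis is exactly that the $\beta_i$ are nonzero and pairwise distinct.

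For existence I would exhibit the curve directly. The map $t\mapsto[\,c_1/(t-a_1),\dots,c_k/(t-a_k)\,]$ with the $a_i$ distinct and $c_i\neq 0$ is, after clearing denominators, given by the Lagrange basis forms $c_j\prod_{i\neq j}(t-a_i)$ of degree $k-1$, hence is an NRC; moreover it passes through $[e_i]$ at $t=a_i$ and through $[c_1,\dots,c_k]$ at $t=\infty$. Choosing $c_i=1$ and $a_i=-\beta_i^{-1}$ then gives an NRC through $[e_1],\dots,[e_k]$, through $[1,\dots,1]$ at $t=\infty$, and through $[\beta_1,\dots,\beta_k]$ at $t=0$. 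The parameters $a_i$ are distinct precisely because the $\beta_i$ are distinct and nonzero, and the standing assumption $k<|\bK|$ guarantees that the $k+2$ parameters $a_1,\dots,a_k,0,\infty$ really are distinct points of $\bP^1$.

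For uniqueness I would prove the companion structural lemma: any NRC through the coordinate points $[e_1],\dots,[e_k]$ has the partial-fraction form above. Parametrizing such a curve as $B\cdot\epsilon_k$ with $B\in PGL_k(\bK)$ and letting $a_i$ be the parameter at which it meets $[e_i]$, each homogeneous coordinate is a degree-$(k-1)$ binary form on $\bP^1$ vanishing at the $k-1$ distinct points $\{a_i:i\neq j\}$; since a nonzero binary form of degree $k-1$ has at most $k-1$ zeros, it must equal $c_j\prod_{i\neq j}(t-a_i)$ up to a scalar, which is exactly the claimed form. Given this, I would use Proposition \ref{equivariance} (the $PGL_2$-reparametrization acting by Möbius transformations on the parameter) to move the parameters of $P_{k+1}$ and $P_{k+2}$ to $\infty$ and $0$: passing through $[1,\dots,1]$ at $\infty$ forces all $c_i$ equal, normalized to $1$; passing through $[\beta_1,\dots,\beta_k]$ at $0$ forces $a_i=-(\sigma\beta_i)^{-1}$; and the leftover scaling freedom fixing $0$ and $\infty$ normalizes $\sigma=1$. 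Hence $a_i=-\beta_i^{-1}$ and the curve coincides with the one built in the existence step.

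The main obstacle is making everything field-independent, which is why the classical statements for $\bK$ algebraically closed or finite cannot simply be quoted. The crux is the structural lemma: its proof rests only on the fact that a nonzero binary form of degree $k-1$ cannot have more than $k-1$ zeros in $\bP^1(\bK)$, so that the $k-1$ forced roots determine each coordinate up to scale over an arbitrary field. The remaining care is bookkeeping with the $PGL_2$ reparametrization, to ensure that the normalizations used in the existence and uniqueness steps are legitimate and that the hypothesis $k<|\bK|$ really supplies enough distinct points on $\bP^1$.
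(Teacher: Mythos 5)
Your proof is correct, but it takes a genuinely different route from the paper's. Both arguments begin by normalizing so that $k$ of the points form the coordinate simplex (the paper reduces to $G=[I_k\,|\,v\ w]$; you normalize one step further, to the simplex plus the unit point plus $[\beta_1,\dots,\beta_k]$, using simple transitivity on frames). From there the paper proves existence indirectly through the duality Lemma \ref{grs_dual}: the null space of $G$ is visibly the row space of a $2\times(k+2)$ matrix $G_2(t,d)$ with $t_i=v_i/w_i$, and by that lemma the same $2\times(k+2)$ matrix is a parity check for $G_k(t,\delta)$, so $G$ and $G_k(t,\delta)$ have the same row space and the points lie on an NRC. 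You instead write the curve down explicitly in Lagrange/partial-fraction form and verify incidence directly; this is really the same computation seen from the other side (the coefficients $\delta_i=d_i^{-1}\prod_{j\neq i}(t_i-t_j)^{-1}$ in \eqref{eq:delta} are exactly your Lagrange constants), but your version bypasses the duality lemma entirely. The uniqueness arguments genuinely diverge: the paper fixes three parameters by triple transitivity, passes to the dual $2\times(k+2)$ matrices, and concludes with an eigenvector argument forcing a $2\times2$ matrix with three pairwise independent eigenvectors to be scalar; you prove a structural lemma --- every NRC through the coordinate simplex has each coordinate equal to $c_j\prod_{i\neq j}$ of linear forms, because a nonzero binary form of degree $k-1$ with $k-1$ prescribed distinct roots is determined up to scale over any field --- and then pin down the remaining freedom by sending the parameters of $P_{k+1},P_{k+2}$ to $\infty,0$. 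Both routes are fully field-independent; yours yields a reusable description of all NRCs through the coordinate simplex, while the paper's is more economical given that Lemma \ref{grs_dual} is needed anyway for Theorem \ref{main_thm}. The only detail to make explicit in your write-up is that the structural lemma should be phrased with homogeneous binary forms so that a parameter value of $\infty$ causes no trouble, and that injectivity of the parametrization guarantees no $a_i$ collides with the parameters $0$ and $\infty$ assigned to $P_{k+2}$ and $P_{k+1}$.
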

Here, general position means that no $k$ points of $\mP$ lie on any hyperplane of $\bP^{k-1}$.
We provide a proof for arbitrary fields $\bK$ using Lemma \ref{grs_dual}.

\begin{proof}
Up to a projective transformation, we can assume that the given collection $\mP$ of $k+2$ points is represented by the columns of the
$k \times (k+2)$ matrix $G=[I_k | v \, w]$, where all $1 \times 1$  and all $2 \times 2$ minors of the $k \times 2$ matrix $[v \, w]$ are nonzero. Let $t_i = v_i/w_i$ and consider the $2 \times (k+2)$ matrix  $G_2(t,d)$ with $d=(-v_1, -v_2,\dots, -v_k,1,1)$, and $t = (t_1,t_2,\dots,t_k,0,\infty)$, in other words:
\[ G_2(t,d) = \bbsm -v^t & 1 &0\\ -w^t &0 &1 \besm \]
Clearly the rows of $G_2(t,d)$ form  a basis for the null space of $G$. However, by Lemma \ref{grs_dual}, the rows of $G_2(t,d)$ also form a basis for the null space of
$G_k(t,\delta)$, where $\delta$ is related to $d$ by \eqref{eq:delta}. Since, the standard dot product on $\bF_q^{k+2}$ is non-degenerate, the subspace of $\bF_q^{k+2}$ orthogonal to the row space of $G_2(t,d)$ is the row space of $G$ as well as the row space of $G_k(t,\delta)$. It follows that there exists a non-singular matrix $R$ such that $G = R G_k(t,\delta)$.
In other words the points of $\mP$ lie on the NRC obtained by applying the projective transformation $[R]$ to the standard NRC in $\bP^{k-1}$. \\

Next, we establish uniqueness: Suppose $\mP$ lies on two NRCs. We may suppose that one of these is the standard NRC $\epsilon_k(\bP^1)$. In other words there exist:  $R \in GL_k(\bK)$,  distinct elements $T_1, \dots, T_{k+2} \in \bK \cup \{\infty\}$,  distinct elements $t_1, \dots, t_{k+2} \in \bK \cup \{\infty\}$,  and $D_1, \dots, D_{k+2} \in \bK^{\times}$ such that:
\beq \label{eq:two_rnc}
     R G_k(t,d) = G_k(T,D), \quad \text{where } \, d_1 = \dots = d_{k+2} = 1.
\eeq
We must show $T=t$. Using \eqref{eq:mobius}, and the fact that $PGL_2(\bK)$ acts triply transitively on $\bK \cup \{\infty\}$, we may assume $t_i = T_i$ for $i = 1 \dots 3$. By Lemma \ref{grs_dual}, there exists $d_1', \dots,d_{k+2}' \in \bK^{\times}$ and $D_1', \dots, D_{k+2}' \in \bK^{\times}$ such that the rows of $G_2(t,d')$ as well as the rows of $G_2(T,D')$ form a basis for the null space of  $G_k(T,D) = R G_k(t,d)$. Therefore, there exists $R' \in GL_2(\bK)$ such that
$R' G_2(t,d') = G_2(T,D')$. Since  $t_i = T_i$ for $i = 1 \dots 3$, the first three columns of $G_2(t,d')$ are pairwise independent eigenvectors of
$R'$. So, there must be an eigenvalue of $R'$ with two independent eigenvectors, i.e. $R'$ is a scalar matrix, whence $T=t$.
\end{proof}
\begin{definition}
    Let Aut$(\mC_k) = \{ A \in PGL_k(\bK) \mid  A (\mC_k)  = \mC_k\}$.
\end{definition}
\begin{lemma} \label{k+2}
   If  $A \in PGL_k(\bK)$  carries $k+2$ points of $\mC_k$ to some other set of $k+2$ points of $\mC_k$, then $A \in \text{Aut}(\mC_k)$.
\end{lemma}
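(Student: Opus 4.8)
The plan is to deduce the lemma from the uniqueness clause of Theorem~\ref{Castelnuovo}. Write $\mP = \{p_1,\dots,p_{k+2}\}$ for the given $k+2$ points of $\mC_k$, so that by hypothesis $A(\mP)$ is another $(k+2)$-point subset lying on $\mC_k$. The whole proof will hinge on viewing $A(\mP)$ as a set of points in general position sitting on two NRCs at once.

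The first step is to record the fact that any $k+2$ points of an NRC are in general position. For the standard curve this is immediate: choosing any $k$ of the points $c(t_i)=(1,t_i,\dots,t_i^{k-1})^T$ with the $t_i$ distinct yields a Vandermonde matrix of nonzero determinant, so no $k$ of them lie on a hyperplane. Since an arbitrary NRC is $B(\mC_k)$ for some $B\in PGL_k(\bK)$ and projective transformations preserve general position, the same holds on every NRC. The running assumption $k<|\bK|$ guarantees $\mC_k$ has at least $k+2$ points, so the hypothesis is not vacuous.

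With this in hand the argument is short. Because $A$ is a bijection of $\bP^{k-1}$, the image $A(\mP)$ consists of $k+2$ distinct points, and these lie on $A(\mC_k)$, which is an NRC by definition. By hypothesis $A(\mP)$ also lies on $\mC_k$. Thus $A(\mP)$ is a set of $k+2$ points, in general position by the previous step, lying on the two NRCs $A(\mC_k)$ and $\mC_k$; the uniqueness part of Theorem~\ref{Castelnuovo} then forces $A(\mC_k)=\mC_k$, i.e.\ $A\in\mathrm{Aut}(\mC_k)$. I do not anticipate a genuine obstacle here: the only point requiring care is the general-position claim, and once it is in place the lemma is simply the statement that an NRC is determined by any $k+2$ of its points.
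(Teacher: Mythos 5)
Your proposal is correct and follows the same route as the paper: both deduce the lemma from the uniqueness clause of Theorem~\ref{Castelnuovo} applied to the $k+2$ points $A(\mP)$, which lie on both $\mC_k$ and the NRC $A(\mC_k)$. The only difference is that you explicitly verify the general-position hypothesis via the Vandermonde determinant, a point the paper leaves implicit.
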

\begin{proof}
Let $\mP$ be a set of $k+2$ points of $\mC_k$, such that $A \mP \subset \mC_k$. By Theorem \ref{Castelnuovo}, the unique NRC passing through $A \mP$ is $\mC_k$. Hence $A(\mC_k) = \mC_k$ i.e. $A \in $ Aut$(\mC_k)$.
\end{proof}
The next theorem states that the (linear) automorphism group of a NRC is isomorphic to $PGL_2(\bK)$.\\
\emph{Remark:} The NRC is the simplest case (dim$(V) = 2$) of the Veronese embedding  \eqref{eq:NRC_intr}. We expect a similar result for the Veronese variety over an arbitrary field : Aut$(\eta_k(\bP^{m-1})) = \rho'(PGL_{m}(\bK))$. Usually, automorphism groups of such varieties are computed in literature only for algebraically closed $\bK$. However, for the Pl\"ucker embedding of the Grassmannian, orthogonal and symplectic grassmannians, the automorphism groups were determined by Chow for arbitrary fields $\bK$, from a combinatorial/incidence geometry viewpoint in the well known work \cite{Chow}. We are not aware of any such results for the Veronese embedding. Therefore, we give a proof here which holds for arbitrary fields. For $\bK=\mathbb{F}_q$, this theorem can be found in e.g. \cite[Thm. 6.32]{HT2}.
\begin{theorem} \label{aut_rnc} (Automorphism group of the normal rational curve.) \\
   Let $\bK$ be an arbitrary field, and let $\mC_k$ be the NRC:  $\epsilon(\bP^1) \subset  \bP^{k-1}$ as defined in \eqref{eq:rnc_def}.
   Assuming  $k< |\bK|:$
   \[\text{Aut}(\mC_k) = \rho'(PGL_2(\bK)) \simeq PGL_2(\bK).\]
\end{theorem}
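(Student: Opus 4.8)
The containment $\rho'(PGL_2(\bK)) \subseteq \text{Aut}(\mC_k)$ is already in hand: Proposition \ref{equivariance} shows that each $\rho'([g])$ maps $\mC_k$ to itself, while Proposition \ref{mono} gives the isomorphism $\rho'(PGL_2(\bK)) \simeq PGL_2(\bK)$. So the entire content of the theorem is the reverse inclusion $\text{Aut}(\mC_k) \subseteq \rho'(PGL_2(\bK))$, and the plan is to manufacture, for each $A \in \text{Aut}(\mC_k)$, an explicit $g \in PGL_2(\bK)$ with $A = \rho'([g])$.

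First I would upgrade the statement ``$A$ permutes the point set $\mC_k$'' to an equality of curves. Since $\bP^1 = \bK \cup \{\infty\}$ and $k < |\bK|$, the curve $\mC_k$ contains at least $k+2$ points, and any $k$ of them are linearly independent (a Vandermonde determinant), so they are in general position. Given $A \in \text{Aut}(\mC_k)$, the image $A(\mC_k)$ is again an NRC, and it passes through the same $k+2$ general-position points of $\mC_k$; by the uniqueness assertion of Theorem \ref{Castelnuovo} (equivalently Lemma \ref{k+2}) the two NRCs coincide. Thus $A$ restricts to an isomorphism of the normal rational curve onto itself, and the reparametrization $\psi := \epsilon_k^{-1}\circ A \circ \epsilon_k$ is a well-defined automorphism of $\bP^1$ over $\bK$.

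The crux, and the step I expect to need the most care over an arbitrary field, is to show that $\psi$ is a fractional-linear transformation, i.e. $\psi = [g]$ for some $g \in PGL_2(\bK)$ — rather than some higher-degree or purely inseparable self-map of $\bP^1$. The curve-level equality of the previous paragraph is exactly what makes this elementary: writing the components of $A\cdot\epsilon_k[1,t]$ as $p_0(t),\dots,p_{k-1}(t)$ with $\deg p_i \le k-1$, the relation $A\cdot\epsilon_k[1,t]=\epsilon_k(\psi[1,t])$ gives $\psi(t)=p_1(t)/p_0(t)$ and, taking the last coordinate, $\psi(t)^{k-1}=p_{k-1}(t)/p_0(t)$ as identities of rational functions. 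The right-hand side has numerator and denominator of degree $\le k-1$, whereas in lowest terms $\psi^{k-1}$ has degree $(k-1)\deg\psi$; hence $(k-1)\deg\psi \le k-1$, forcing $\deg\psi \le 1$. (Alternatively one may invoke that $\psi$ and $\psi^{-1}$ are mutually inverse morphisms, so $\deg\psi=1$; or reduce via the triple transitivity of $PGL_2(\bK)$ to the case where $\psi$ fixes $0,1,\infty$.) Either way $\psi=[g]$ with $g\in PGL_2(\bK)$.

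Finally I would compare $A$ with $\rho'([g])$. By Proposition \ref{equivariance}, $\rho'([g])\cdot\epsilon_k[1,t] = \epsilon_k(g\cdot[1,t]) = \epsilon_k(\psi[1,t]) = A\cdot\epsilon_k[1,t]$, so $A$ and $\rho'([g])$ agree at every point of $\mC_k$. Selecting any $k+1$ of these points — a projective frame, since the NRC points are in general position — forces $A = \rho'([g])$ in $PGL_k(\bK)$. Therefore $A \in \rho'(PGL_2(\bK))$, which together with the first paragraph and Proposition \ref{mono} yields $\text{Aut}(\mC_k) = \rho'(PGL_2(\bK)) \simeq PGL_2(\bK)$.
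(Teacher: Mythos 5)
Your opening and closing paragraphs are fine (the containment $\rho'(PGL_2(\bK))\subseteq\mathrm{Aut}(\mC_k)$ and the final comparison of $A$ with $\rho'([g])$ on a projective frame), and your overall strategy is the classical one, but the middle step has a genuine gap precisely in the regime the theorem is designed to cover. The hypothesis is only $k<|\bK|$, so $\bK$ may be a finite field with barely more than $k$ elements, and then $\psi:=\epsilon_k^{-1}\circ A\circ\epsilon_k$ is a priori nothing more than a set-theoretic permutation of the finite set $\bP^1(\bK)$. The relation $A\cdot\epsilon_k[1,t]=\epsilon_k(\psi[1,t])$ is a collection of $|\bK|+1$ pointwise proportionality statements, not an identity of rational functions; to promote it to one you must show that certain polynomials vanish identically --- e.g.\ $p_1^{k-1}-p_{k-1}\,p_0^{k-2}$, of degree $(k-1)^2$, or the $2\times2$ minors $p_i p_{j+1}-p_{i+1}p_j$, of degree $2k-2$ --- while all you actually know is that they vanish at the at most $|\bK|+1$ points of $\bP^1(\bK)$. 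This is conclusive only when $|\bK|$ exceeds the relevant degree; already for $k=5$ over $\mathbb{F}_7$ (where $k<|\bK|$ holds but $2k-2=|\bK|+1$) the count is inconclusive, so ``$\deg\psi\le1$'' is not established under the stated hypothesis. Your two parenthetical alternatives fail for the same reason: ``$\psi$ and $\psi^{-1}$ are mutually inverse morphisms'' presupposes that $\psi$ is a morphism of some degree, whereas over a finite field every self-map of $\bP^1(\bK)$ is interpolated by polynomials and the ``degree'' of the set-map is not yet meaningful; and triple transitivity only normalizes $\psi$ at three points, it does not show $\psi$ is fractional-linear.

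The paper's proof is built to sidestep exactly this. It never treats $\psi$ as a self-map of $\bP^1$: it fixes $k+2$ points of the curve, writes $R\,G_k(t,d)=G_k(T,D)$, uses $\rho'(PGL_2(\bK))\subset\mathrm{Aut}(\mC_k)$ together with triple transitivity to arrange $t_i=T_i$ for $i\le3$, then passes to the dual $2\times(k+2)$ picture via Lemma \ref{grs_dual} to obtain $R'G_2(t,d')=G_2(T,D')$ with three pairwise independent common eigenvectors, forcing $R'$ scalar and hence $T=t$; finally a multiset-of-eigenvalues argument on the $k+2$ general-position eigenvectors of $R$ forces $R$ scalar. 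Everything is finite linear algebra on $k+2$ points, which is why it works for an arbitrary field with only $k<|\bK|$. To salvage your route you would need either to strengthen the hypothesis (say to $|\bK|>2k-2$) or to replace the rational-function step by an argument, such as the paper's, that extracts $g$ from only $k+2$ points.
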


\begin{proof}
Let $R \in GL_k(\bK)$ such that $\pi(R) \in \text{Aut}(\mC_k)$. Here $\pi:GL_k(\bK) \to PGL_k(\bK)$ is the quotient homomorphism.   Let $t_1, t_2, \dots, t_{k+2}$ be distinct elements of $\bK \cup \{\infty\}$. Let $d_i = 1$ for $i = 1 \dots k+2$.
Since $\pi(R) \in \text{Aut}(\mC_k)$, there exist  $T_1, \dots, T_{k+2} \in \bK \cup \{\infty\}$, and $D_1, \dots, D_{k+2} \in \bK^{\times}$ such that $R G(t,d) = G(T,D)$ (this is the same as \eqref{eq:two_rnc}). By Proposition \ref{equivariance}, certainly $\rho'(PGL_2(\bK)) \subset \text{Aut}(\mC_k)$. Therefore, as in the proof of Theorem \ref{Castelnuovo} above, we may assume $t_i = T_i$ for $i= 1 \dots 3$. The discussion following \eqref{eq:two_rnc} in the proof of  Theorem \ref{Castelnuovo}, shows that $T = t$. We must now show that $R$ is a scalar matrix. Returning to the equation $R G(t,d) = G(T,D)$, we see that the $k+2$ columns of $G_k(t,d)$ are eigenvectors of $R$, any $k$ of which are linearly independent. If $\lambda_1, \dots, \lambda_{k+2}$ are the corresponding eigenvalues, then it follows that the multiset formed by any $k$ of these $\lambda_i$ equals the multiset of eigenvalues of $R$. In particular for each $i$ satisfying $1 \leq i \leq k$, we have:
  \[ \{\lambda_1, \dots,\lambda_k\} = \{\lambda_1, \dots, \lambda_{i-1}, \lambda_{k+1},\lambda_{i+1}, \dots, \lambda_{k}\} \quad \text{as multisets}.\]
  This is possible only if $\lambda_1 = \dots = \lambda_k$, whence $R$ is a scalar matrix.
  \end{proof}
We are finally ready to prove Theorem \ref{main_thm}.
\begin{proof} [Proof of Theorem \ref{main_thm}]
Let $\mG$ denote the inverse image of Aut$(\mC_k) = \rho'(PGL_2(\bK))$ under the canonical epimorphism $\pi:GL_k(\bK) \to PGL_k(\bK)$.
We note that the group $\bK^{\times} = \text{ker}(\pi)$ of scalar matrices is contained in the center of $\mG$. Therefore, $\mG$ is a certain central extension of $PGL_2(\bK) \simeq \rho'(PGL_2(\bK))$ by $\bK^{\times}$. We will prove that each $\sim$-equivalence class in $S_{k,n}$ is in bijective correspondence with $\mG$.
Suppose $R \in GL(k, \bF_q)$ satisfies $ R \, G_k(t,d) = G_k(\tau,\delta)$, for some $G(t,d)$ and $G(\tau,\delta)$. Since $n \geq k+2$, we can use Lemma \ref{k+2}  to conclude that  $[R] \in $Aut$(\mC_k)$, i.e. $R \in \mG$. \\

It now suffices to prove that $\mG$ acts freely on $S_{k,n}$. Suppose $R \, G(t,d) = G(t ,d)$ where $R \in \mG$. By Theorem \ref{aut_rnc}, there is a unique $g \in PGL_2(\bK)$ such that  $\pi(R) = \rho'(\pi(g))$. By Proposition \ref{equivariance}, the  Mobius transformation $g(t) =  \tfrac{\gamma + \delta t}{\alpha+\beta t}$ fixes $n \geq 3$ points $\{[1,t_i] \,:\, i = 1 \dots n\}$ of $\bP^1$. Therefore, $g$ (and hence $[R] = \rho'([g])$) is the identity transformation. It follows that $R$ is a scalar matrix. The equation  $R \, G(t,d) = G(t ,d)$ now implies $R$ is the identity matrix.
\end{proof}

A more general question is to compute the number $\gamma(k,n)$ of distinct MDS codes of length $n$ and dimension $k$ over $\mathbb{F}_q$. When $q$ is a prime, then $n \le q+1$ and if $n=q+1$, the codes are GRS \cite{Ball1,Ball2}. But when $q=9$, there is an example of a $10$-arc which is not an NRC \cite{Glynn1986}. Surprisingly, it is possible to describe the asymptotic behaviour of the number of distinct MDS codes. More precisely, in \cite{kaipa} some results were obtained recently about the asymptotic behaviour of $\gamma(k,n)$ where it was shown that:
\begin{equation}
\gamma(k,n)=q^\delta+(1-N)q^{\delta-1}+a_2q^{\delta-2}+\mathcal O\left(q^{\delta-3}\right),
\end{equation}
where $\delta=k(n-k)$, $N=\binom{n}{k}$ and
$$a_2=N k(n-k)\left(\dfrac{k^2-nk+n+3}{2(k+1)(n-k+1)}\right)+\dfrac{N^2}{2}-5\dfrac{N}{2}+2.$$
In these asymptotic formulae, it is assumed that $n$ and $k$ are fixed, while $q$ tends to infinity. We can use Theorem \ref{thm:countGRS} to obtain a similar asymptotic formula for the number of GRS codes:

\begin{proposition}\label{prop:countGRSasympt}
Let $\mathbb{F}_q$ be the finite field with $q$ elements and choose natural numbers $k,n$ such that $4 \le k+2 \le n \le q+1$. For fixed $k$ and $n$ and $q$ tending to infinity, we have:
\begin{multline} \gamma_{GRS}(k,n) = q^{2n-4}-\frac{(n-2)(n+1)}{2}q^{2n-5} \\ \\ +\frac{(n-2)(3n^3-4n^2+n-24)}{24}q^{2n-6}+\mathcal O\left(q^{2n-7}\right).
\end{multline}
\end{proposition}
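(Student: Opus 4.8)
The plan is to treat Theorem \ref{thm:countGRS} as an exact closed form and simply extract the top three coefficients of the resulting polynomial in $q$. Note first that the right-hand side $(q-1)^{n-1}(q-2)\cdots(q-n+2)$ depends only on $n$ and not on $k$, which is consistent with the claimed expansion. Writing this product as a monic polynomial $\prod_i (q-a_i)$, its roots form the multiset consisting of the value $1$ with multiplicity $n-1$ together with each of $2,3,\dots,n-2$ with multiplicity one. The total number of roots is $(n-1)+(n-3)=2n-4$, so the polynomial has degree $2n-4$ with leading coefficient $1$, matching the $q^{2n-4}$ term.

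By the Vieta relations, the coefficient of $q^{2n-5}$ equals $-e_1$ and that of $q^{2n-6}$ equals $e_2$, where $e_1,e_2$ are the first two elementary symmetric functions of the root multiset above. I would compute these via the power sums $p_1=\sum_i a_i$ and $p_2=\sum_i a_i^2$, using $e_1=p_1$ and $e_2=\tfrac12(p_1^2-p_2)$. Both power sums reduce to the standard identities $\sum_{j=1}^{m} j=m(m+1)/2$ and $\sum_{j=1}^{m} j^2=m(m+1)(2m+1)/6$: for instance $p_1=(n-1)+\sum_{j=2}^{n-2} j=(n-2)+\tfrac{(n-2)(n-1)}{2}$, which simplifies to $\tfrac{(n-2)(n+1)}{2}$ and so gives the stated $q^{2n-5}$ coefficient. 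An analogous but lengthier simplification of $e_2=\tfrac12(p_1^2-p_2)$ produces $\tfrac{(n-2)(3n^3-4n^2+n-24)}{24}$, the claimed $q^{2n-6}$ coefficient.

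The main (and essentially only) obstacle is bookkeeping rather than any conceptual difficulty: one must correctly track the multiplicity $n-1$ of the root $1$, keep the alternating Vieta signs straight, and carry the polynomial algebra in the $e_2$ step through without slips. No estimation is needed for the remainder, since the expression is an honest polynomial of degree $2n-4$ in $q$; all monomials of degree below $2n-6$ are simply collected into the $\mathcal{O}(q^{2n-7})$ error term.
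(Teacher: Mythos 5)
Your proposal is correct: the claimed coefficients do follow from Theorem \ref{thm:countGRS} exactly as you describe, and I verified that $e_1=\tfrac{(n-2)(n+1)}{2}$ and $e_2=\tfrac12(p_1^2-p_2)=\tfrac{(n-2)(3n^3-4n^2+n-24)}{24}$ for the root multiset $\{1^{(n-1)},2,3,\dots,n-2\}$. The overall strategy is the same as the paper's (the formula is an honest polynomial of degree $2n-4$, so one just reads off its top coefficients), but the bookkeeping differs: the paper splits the product as $(q-1)^{n-2}\cdot\bigl[(q-1)(q-2)\cdots(q-n+2)\bigr]$, expands each factor to three terms (the second via $\sum_i i$ and $\sum_{i<j}ij$ with explicit summation formulae), and multiplies the truncated series, whereas you treat all $2n-4$ roots at once and pass through the power sums via Newton's identity. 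Your route avoids the final series multiplication and is arguably the cleaner of the two; the paper's factorization has the mild advantage that the two displayed intermediate expansions \eqref{eq:firstterm} and \eqref{eq:secondterm} are reusable on their own. Either way the remainder term needs no estimation, as you note.
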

\begin{proof}
On the one hand, it is easy to see that
\begin{equation}\label{eq:firstterm}
(q-1)^{n-2}=q^{n-2}-(n-2)q^{n-3}+\frac{(n-2)(n-3)}{2}q^{n-4}+\mathcal O\left(q^{n-5}\right),
\end{equation}
while on the other hand,
$$(q-1)\cdot (q-2) \cdots (q-n+2)=q^{n-2}-\sum_{i} i \ q^{n-3}+ \sum_{i<j}i\cdot j \ q^{n-4}+\mathcal O\left(q^{n-5}\right),$$ where in both summations the variables $i$ and $j$ vary between $1$ and $n-2$. The coefficient of $q^{n-3}$ is then easily determined, since $$\sum_{i=1}^{n-2} i =\dfrac{(n-1)(n-2)}{2}.$$ The determination of the coefficient of $q^{n-4}$ is somewhat more involved, but we have
$$\sum_{i<j}i\cdot j = \sum_{i=1}^{n-2} i \sum_{j=i+1}^{n-2} j = \sum_{i=1}^{n-2} i \left[ \dfrac{(n-1)(n-2)}{2}-\dfrac{(i+1)i}{2} \right].$$ If we apply the well-known summation formulae for squares and cubes to this expression and simplify, we see that
\begin{equation}\label{eq:secondterm}
\begin{split}
(q-1)\cdot (q-2) \cdots (q-n+2)=q^{n-2}-\frac{(n-1)(n-2)}{2}q^{n-3}+ \\ \\ \frac{(n-1)(n-2)(n-3)(3n-4)}{24}q^{n-4}+\mathcal O\left(q^{n-5}\right).
\end{split}
\end{equation}
Since by Theorem \ref{thm:countGRS} the number of distinct $[n,k]$ GRS codes over $\mathbb{F}_q$ is just the product of the left-hand sides in equations \eqref{eq:firstterm} and \eqref{eq:secondterm}, we readily obtain the desired asymptotic formula by multiplying the right-hand sides occurring in these equations.
\end{proof}

\begin{remark}
The first term $q^{2n-4}$ in the formula in Proposition \ref{prop:countGRSasympt} corresponds to the fact observed in \cite{dur} that the GRS codes form an algebraic subset of dimension $2n-4$ in the $k(n-k)$ dimensional affine space. 
\end{remark}

\begin{remark}
Exploiting the structure of the coefficient of the asymptotic expansion of $\gamma_{GRS}(n,k)$, it is easy to ascertain that the coefficient of $q^{2n-4-i}$ will be a polynomial in $n$ (with rational coefficients) of degree $2i$. Using this fact, one can easily determine further coefficients. For the interested reader we state the result one obtains when computing the coefficient $a_{2n-7}$ of $q^{2n-7}$ in the asymptotic expansion: $$a_{2n-7}=-\dfrac{(n-2)(n-3)(n^4-2n^3+7n^2-14n-16)}{48}.$$
\end{remark}

\begin{remark}
We note that $\gamma_{GRS}(n,k) = \gamma(n,2)$ for $4 \leq n \leq q+1$.
To see this, we observe that $\gamma_{GRS}(n,k)$ is independent of $k$ for $4 \leq k+2 \leq n \leq q+1$ by Theorem \ref{thm:countGRS}. Thus $\gamma_{GRS}(n,k) = \gamma_{GRS}(n,2)$ for $4 \leq n \leq q+1$.
Any generator matrix for a $2$-dimensional MDS code is clearly of the form $G_2(t,d)$ (see \eqref{eq:gen_mat}), and hence every $2$-dimensional MDS code is GRS. Thus $\gamma_{GRS}(n,2) = \gamma(n,2)$.
\end{remark}

\section{Counting GRS codes of length $q+2$.}

It is well known that for $q$ even, there exist MDS codes of length $q+2$ and dimension $3$. Such codes arise from combinatorial structures known as hyperovals, which we will define in a moment. First of all for a natural number $n$, one defines a (planar) $n$-arc to be a set of $n$ mutually distinct points in the projective plane $\mathbb{P}^2(\mathbb{F}_q)$ with the property that no three points in the $n$-arc lie on a line. In case $n=q+1$, such an $n$-arc is called an oval, while if $n=q+2$ it is called a hyperoval. It is well known that $n$-arcs give rise to $[n,3]$ MDS codes. A generator matrix of a code can be obtained from an $n$-arc the following way: first one orders the projective points in the $n$-arc, then one chooses representatives in $\mathbb{F}_q^3$ of the projective points and finally one uses these $n$ ordered elements from $\mathbb{F}_q^3$ as the columns of a generator matrix of a code. The resulting codes are easily seen to be $[n,3]$ MDS codes by the defining properties of an $n$-arc. Note that there exists a more general definition of arcs in $\mathbb{P}^m(\mathbb{F}_q)$, but in this article we will always assume that $m=2$ and assume this implicitly when speaking about arcs.

In 1947 R.C.~Bose proved that the maximal size of any arc in a projective plane of odd order $q$ is $q+1$, and if $q$ is even the maximum may be $q+2$, see pp.~149 in \cite{Dembowski}. In odd characteristic, Segre (see Sections 173 and 174 in \cite{segre}) proved that any $(q+1)$-arc, that is to say any oval, in fact consists of the $q+1$ rational points lying on a non-degenerate conic. By slight abuse of terminology such ovals are simply called conics as well. In even characteristic the situation is different: not all ovals can be described as a conic, but other ovals exist. It is still true though that the $q+1$ rational points on a non-degenerate conic form an oval. In even characteristic one type of hyperoval can be obtained by taking the $q+1$ rational points on a conic and then adding the nucleus which is the common intersection point of all tangent lines of this conic \cite{Glynn2, HKT}. We will use the word hyperconic for such a hyperoval. In general not all hyperovals can be described as a hyperconic, but for $q=2$, $q=4$ and $q=8$, all hyperovals are known to be hyperconics (page 290 \cite{segre}). Note that if an $n$-arc is contained in a conic, the MDS code corresponding to this $n$-arc actually is a GRS code. Following this relation, we define by analogy the $[q+2,3]$ MDS codes obtained from hyperconics to be GRS codes as well. From now on in this section, we will assume that $q$ is even. Our goal is to count the number of distinct $[q+2,3]$ GRS codes. Since any GRS code with parameters $[q+2,3]$ corresponds to a hyperconic, we can obtain all of these codes by extending $[q+1,3]$ GRS codes. Geometrically, this means that we simply are adding the nucleus to the conic. Conics and hyperconics play an important role in the theory of GRS codes. A generator matrix for a GRS code of length $q+1$ can be obtained by putting the $q+1$ points of a conic as columns of a matrix. Adding an additional ``nucleus" column to, yields a generator matrix of a length $q+2$ GRS code.

We are now ready to count the number of distinct $[q+2,3]$ GRS codes:
\begin{theorem}\label{thm:countGRS2}
Let $q=2^e$ for an integer $e \ge 3$. The number of distinct GRS codes of length $q+2$ and dimension $3$ equals $$\gamma_{GRS}(3,q+2)=(q+2)(q-1)^{q+1}(q-2)!.$$
\end{theorem}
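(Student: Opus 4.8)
The plan is to construct every $[q+2,3]$ GRS code by adjoining a nucleus column to a $[q+1,3]$ GRS code, counting the codes with multiplicity and then removing the multiplicity. For a $[q+2,3]$ GRS code $\mathcal{D}$, let $m(\mathcal{D})$ denote the number of (non-degenerate) conics contained in the hyperconic underlying $\mathcal{D}$. First I would count the pairs $(\mathcal{D},C)$ in which $C$ is a conic inside that hyperconic; this number equals $\sum_{\mathcal{D}} m(\mathcal{D})$. The crucial geometric input, valid precisely because $e\ge 3$, is that $m(\mathcal{D})=1$ for every $\mathcal{D}$; granting this, the sum is exactly the number $\gamma_{GRS}(3,q+2)$ we wish to compute.

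To evaluate $\sum_{\mathcal{D}} m(\mathcal{D})$ I would exhibit a bijection between such pairs $(\mathcal{D},C)$ and triples consisting of a coordinate $p\in\{1,\dots,q+2\}$, a $[q+1,3]$ GRS code on the remaining $q+1$ coordinates, and one of $q-1$ nonzero scalings of an adjoined nucleus column. Starting from a pair, the nucleus of $C$ is the unique point of the hyperconic off $C$; it occupies a unique coordinate $p$, and deleting that coordinate from $\mathcal{D}$ leaves exactly the conic code on $C$. In the reverse direction I use that, since $q$ is even, a conic extends to a hyperoval only by adjoining its nucleus; at the level of codes the $q-1$ nonzero scalings of the nucleus column give $q-1$ distinct codes, and this set of extensions depends only on the length-$(q+1)$ code and not on the chosen generator matrix, because replacing $G$ by $AG$ moves the conic and its nucleus representative together and leaves each extended code unchanged. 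Since by Theorem~\ref{thm:countGRS} there are $(q-1)^{q}(q-2)!$ GRS codes of length $q+1$ on any fixed set of $q+1$ coordinates, the bijection yields $\sum_{\mathcal{D}} m(\mathcal{D}) = (q+2)\,(q-1)^{q}(q-2)!\,(q-1) = (q+2)(q-1)^{q+1}(q-2)!$.

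The heart of the proof is the uniqueness statement $m(\mathcal{D})=1$, that a hyperconic contains a single conic, and this is where the hypothesis $e\ge 3$ is indispensable. A conic in the hyperconic is the whole set minus one point, so it suffices to show that removing a conic point, rather than the nucleus, never leaves a conic. Normalizing so that $C=\{[1,t,t^2]\,:\,t\in\bF_q\}\cup\{[0,0,1]\}$ has nucleus $[0,1,0]$, and using that $\text{Aut}(C)\cong PGL_2(\bF_q)$ (Theorem~\ref{aut_rnc} and Proposition~\ref{equivariance}) acts transitively on the conic points, I may delete the point $t=0$. Imposing a general ternary quadratic form on the remaining oval, the points $[0,0,1]$ and $[0,1,0]$ annihilate the $z^2$- and $y^2$-coefficients, and the points $[1,t,t^2]$ with $t\ne 0$ force a cubic in $t$ to vanish at all $q-1$ nonzero elements of $\bF_q$. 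Because $q\ge 8$ gives $q-1>3$, the cubic is identically zero, so no nonzero quadratic vanishes on the oval and it is not a conic. Hence the nucleus is the only removable point and $m(\mathcal{D})=1$, so $\gamma_{GRS}(3,q+2)=\sum_{\mathcal{D}} m(\mathcal{D})=(q+2)(q-1)^{q+1}(q-2)!$.

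The main obstacle is exactly this uniqueness: for $q=4$ the same cubic is subject to only three constraints, can be nonzero, and the hyperconic then carries several conics (in fact six), so $m(\mathcal{D})$ exceeds $1$ and the plain count overshoots---this is why $e\ge 3$ is assumed. A more routine point demanding care is the bookkeeping in the pairing bijection, namely verifying that the set of extensions is an invariant of the length-$(q+1)$ code and that distinct scalings of the nucleus column produce distinct length-$(q+2)$ codes.
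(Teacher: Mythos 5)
Your proof is correct, and its skeleton is the same as the paper's: both arguments build every $[q+2,3]$ GRS code by inserting a nucleus column into a $[q+1,3]$ GRS code in one of $q+2$ positions with one of $q-1$ scalings, obtaining $(q+2)(q-1)\cdot(q-1)^q(q-2)!$ codes with multiplicity, and both reduce the theorem to showing there is no double counting. Where you differ is in how that last step is justified. The paper cites the known structure of the collineation group of the conic-plus-nucleus hyperoval for $q>4$ (it fixes the nucleus and is sharply $3$-transitive on the remaining points), which immediately forces distinct triples to yield distinct codes. You instead isolate the equivalent geometric fact that for $q\ge 8$ the hyperconic contains a \emph{unique} conic --- so the nucleus position is an invariant of the code --- and prove it from scratch: after normalizing to $y^2=xz$ with nucleus $[0,1,0]$ and using transitivity of $\rho'(PGL_2(\bF_q))$ on the conic points, a quadratic form vanishing on the hyperoval minus a conic point is forced to have a cubic in $t$ with $q-1>3$ roots, hence to vanish identically. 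This is a genuine gain in self-containedness (the paper's group-theoretic fact is quoted, not proved), and your computation also makes transparent exactly why $q=4$ fails: the cubic $a+dt+et^2+ft^3$ can vanish on $\bF_4^\times$ without being zero (giving $x^2+yz$), so the hyperoval carries several conics and the plain count overshoots. The remaining bookkeeping you flag (distinct scalings give distinct codes, and the set of $q-1$ extensions is an invariant of the length-$(q+1)$ code rather than of a chosen generator matrix) is indeed routine: if $A[G\mid\lambda v]=[G\mid\mu v]$ then $AG=G$ forces $A=I$ since $G$ has rank $3$, whence $\lambda=\mu$.
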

\begin{proof}
Since any GRS code with parameters $[q+2,3]$ corresponds to a hyperconic, we can obtain these codes by extending $[q+1,3]$ GRS codes. We already know from Theorem \ref{thm:countGRS} that the number of distinct GRS codes of length $q+1$ equals $(q-1)^q(q-2)!$. Adding a``nucleus" column can be done in $q+2$ distinct positions and also gives rise to an additional column multiplier, giving in total $(q+2)(q-1)^{q+1}(q-2)!$ GRS codes. By definition, all GRS codes of length $q+2$ are obtained in this way. Some of these codes could in principle be the same, but if $q>4$, the group acting on the conic plus nucleus hyperoval (a nonic in \cite{Glynn3}), fixes the nucleus and is sharply $3$-transitive on the remaining points. This means that all $(q+2)(q-1)^{q+1}(q-2)!$ GRS codes obtained above are distinct.
\end{proof}

\begin{remark}
Theorem \ref{thm:countGRS2} does not cover the case that $q=4$, as it was needed in the proof that $q>4$. Indeed in case $q=4$, the (linear) group of the conic plus nucleus is sharply $4$-transitive on all the $q+2=6$ points. The right formula for $q=4$ is therefore given by $$\gamma_{GRS}(3,6)=(q-1)^{q+1}(q-2)!=486.$$

Alternatively, in the next section we shall among others quote a known expression for the number of distinct $[6,3]$ MDS codes. Combining this with Segre's observation that for $q=4$ any hyperoval is a hyperconic, this expression will give the desired number $486$ directly.
\end{remark}

\section{A further investigation of GRS and MDS codes of dimension three}
The formulae given in Theorems \ref{thm:countGRS} and \ref{thm:countGRS2} can be tested by comparing them to other results regarding the number of arcs or equivalently MDS codes as given in \cite{glynn,iam-skor-sor}. There it is stated that $\gamma(3,6)$, the number of distinct $[6,3]$ MDS codes over $\mathbb{F}_q$, equals
$$(q-1)^5( q - 2)(q - 3)(q^2 - 9q + 21).$$
For $q=5$, this formula combined with Theorem \ref{thm:countGRS} implies that any $[6,3]$ MDS code over $\mathbb{F}_5$ is a GRS code, since there exist exactly $6144$ distinct ones of either type. This is in accordance with the aforementioned result by Segre, that in odd characteristic any oval is a conic. For $n=7,8,9$ there exist closed formulae for $\gamma(3,n)$, the number of distinct $[n,3]$ MDS codes, as well \cite{glynn,iam-skor-sor,rolland}. To state these formulae, it is convenient to define the following functions:
$$
\begin{array}{l}
a(q)=1 \ \makebox{if $q$ is a power of $2$, and $a(q)=0$ otherwise,}\\
\\
b(q)=|\{x \in \mathbb{F}_q \, | \, x^2+x+1=0\}|\\
\\
c(q)=1 \ \makebox{if $q$ is a power of $3$, and $c(q)=0$ otherwise,}\\
\\
d(q)=|\{x \in \mathbb{F}_q \, | \, x^2+x-1=0\}|\\
\\
e(q)=|\{x \in \mathbb{F}_q \, | \, x^2+1=0\}|\\
\end{array}
$$
The number of distinct $[7,3]$ MDS codes over $\mathbb{F}_q$ is then given by \cite{rolland}:
$$(q- 1)^6 (( q-3) (q- 5) ( q^4 - 20q^3+ 148q^2 -468q+498) -30a(q)).$$
The number of distinct $[8,3]$ MDS codes over $\mathbb{F}_q$ is given by \cite{glynn}:
$$(q - 1)^7((q - 5)(q^7 - 43q^6 + 788q^5-7937q^4 + 47097q^3- 162834q^2
+299280q - 222960)$$ $$ - 240(q^2 - 20q + 78)a(q) +840b(q)).$$
Finally from \cite{iam-skor-sor}, we quote the result that the number of distinct $[9,3]$ MDS codes is given by
$$(q - 1)^8(q^{10} - 75q^9+2530q^8 - 50466q^7 + 657739q^6 - 5835825q^5
+35563770q^4$$ $$- 146288034q^3 + 386490120q^2
-588513120q + 389442480
-1080(q^4 - 47q^3 + 807q^2$$ $$ - 5921q +15134)a(q)
+840(9q^2 - 243q + 1684)b(q)
+30240(-9c(q) + 9d(q) + 2e(q))).$$

These formulae and Theorems \ref{thm:countGRS} and \ref{thm:countGRS2} give rise to the following table in which for given $q$ and $n$ we count the number of $[n,3]$ GRS and MDS codes over $\mathbb{F}_q$. The two mentioned cases of the number of distinct $[10,3]$ MDS codes are not covered by the above formulae. However, since Segre showed that for $q=4$ and $q=8$ any hyperoval is a hyperconic, these numbers can be computed using Theorems \ref{thm:countGRS} and \ref{thm:countGRS2}.

\begin{center}
\begin{figure}\notag
\begin{tabular}{|c|c|c|c|}
\hline
$q$ & $n$ & $\#$ GRS & $\#$ MDS \\
\hline
$4$ & $6$ & $486$ & $486$ \\
\hline
$5$ & $6$ & $6144$ & $6144$ \\
\hline
$7$ & $6$ & $466560$ & $1088640$ \\
\hline
$7$ & $7$ & $5598720$ & $5598720$ \\
\hline
$7$ & $8$ & $33592320$ & $33592320$ \\
\hline
$8$ & $6$ & $2016840$ & $6554730$ \\
\hline
$8$ & $7$ & $42353640$ & $141178800$ \\ 
\hline
$8$ & $8$ & $592950960$ & $2964754800$ \\ 
\hline
$8$ & $9$ & $4150656720$ & $41506567200$ \\ 
\hline
$8$ & $10$ & $290545970400$ & $290545970400$ \\
\hline
$9$ & $6$ & $6881280$ & $28901376$ \\
\hline
$9$ & $7$ & $220200960$ & $1604321280$ \\
\hline
$9$ & $8$ & $5284823040$ & $15854469120$ \\ 
\hline
$9$ & $9$ & $84557168640$ & $84557168640$ \\
\hline
$9$ & $10$ & $676457349120$ & $676457349120$ \\
\hline
\end{tabular}
\vspace{.9ex}

\text{Table 1: Number of GRS and MDS codes of small length and dimension $3$.}
\end{figure}
\end{center}

There are some observations we would like to make: in the first place, Segre's observation that in odd characteristic any oval is a conic, is confirmed by the above table. Of course also his result that for $q=4$ and $q=8$ any hyperoval is a hyperconic is reflected in the table, but we actually used this result to calculate the number of MDS codes in these situations. Another observation is that for odd $q$ the number of $[q,3]$ GRS and MDS codes are the same. This is a reflection of the fact that Segre showed that in odd characteristic any $q$-arc can be extended (uniquely if $q>3$) to an oval. The following observation is less trivial and in the remaining part of the article we will explain it.
\begin{observation}\label{obs:int}
Let $q=8$. Then the number of $[9,3]$ (resp. $[8,3]$) MDS codes is $10$ (resp. $5$) times the number of GRS codes of this length and dimension.
\end{observation}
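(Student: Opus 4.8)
The plan is to reduce the statement to a comparison of arc counts and then exploit the fact that for $q=8$ all hyperovals are hyperconics. First I would record the dictionary between codes and arcs. For $4 \le n \le q+1 = 9$ every $[n,3]$ MDS code has a generator matrix whose columns, read as projective points, form an $n$-arc in $\mathbb{P}^2(\mathbb{F}_8)$, and a GRS code is precisely one whose arc lies on a non-degenerate conic. Since $GL_3(\mathbb{F}_q)$ acts freely on the set of such generator matrices (a matrix fixing three linearly independent columns is the identity, and any three columns of an arc are independent), choosing a representative vector for each of the $n$ points contributes a factor $(q-1)^n$ and ordering contributes $n!$. This gives $\gamma(3,n) = (q-1)^n\, n!\, U_n/|GL_3(\mathbb{F}_q)|$ and $\gamma_{GRS}(3,n) = (q-1)^n\, n!\, U_n^{\mathrm{conic}}/|GL_3(\mathbb{F}_q)|$, where $U_n$ (resp. $U_n^{\mathrm{conic}}$) is the number of unordered $n$-arcs (resp. $n$-arcs lying on a conic). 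Hence $\gamma(3,n)/\gamma_{GRS}(3,n) = U_n/U_n^{\mathrm{conic}}$, and it remains only to evaluate these two arc counts for $n=8,9$.

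Next I would count the conic-arcs. Writing $C$ for the number of conics in $\mathbb{P}^2(\mathbb{F}_8)$, each conic carries exactly $q+1=9$ points, every subset of which is automatically an arc; moreover, since two distinct conics meet in at most four points, any conic-arc of size $n \ge 5$ lies on a unique conic. Therefore $U_n^{\mathrm{conic}} = C\binom{9}{n}$, so that $U_9^{\mathrm{conic}} = C$ and $U_8^{\mathrm{conic}} = 9C$. For the full count $U_n$ I would pass through hyperovals: by Segre's classification \cite{segre} every hyperoval in $\mathbb{P}^2(\mathbb{F}_8)$ is a hyperconic, and since the nucleus of a conic is determined by the conic, sending each conic to the conic together with its nucleus gives a bijection onto the set of hyperovals; thus there are exactly $C$ hyperovals.

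The heart of the argument is the claim that for $q=8$ every $n$-arc with $n \in \{8,9\}$ lies in a unique hyperoval. Uniqueness is easy: if an $8$-arc lay on two distinct hyperconics $C_1 \cup \{N_1\}$ and $C_2 \cup \{N_2\}$, then at least $7$ of its points would lie on each $C_i$, hence at least $7+7-8=6$ on $C_1 \cap C_2$, forcing $C_1 = C_2$ and so $N_1 = N_2$; for a $9$-arc (an oval) uniqueness is just uniqueness of the nucleus extension. Granting this, each $n$-arc corresponds to exactly one pair consisting of a hyperoval together with an $n$-subset of its $10$ points, so $U_n = C\binom{10}{n}$, giving $U_9 = 10C$ and $U_8 = 45C$. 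Combining with the previous step, $\gamma(3,n)/\gamma_{GRS}(3,n) = \binom{10}{n}/\binom{9}{n} = 10/(10-n)$, which is $10$ for $n=9$ and $5$ for $n=8$, as asserted.

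I expect the main obstacle to be the \emph{existence} half of the extension claim: that every $8$-arc (and likewise every $7$-arc) actually extends to a hyperoval. For ovals this is the standard fact that in even characteristic every oval possesses a nucleus, but for smaller arcs it requires a Segre-type completeness bound, namely that in $\mathbb{P}^2(\mathbb{F}_q)$ with $q$ even every $k$-arc with $k > q-\sqrt{q}+1$ is contained in a unique hyperoval (see e.g. \cite{Hirschfeld1985}). For $q=8$ this threshold equals $9-2\sqrt{2}\approx 6.17$, so it applies exactly to $k \ge 7$ and \emph{not} to $k=6$ — which is precisely why the clean ratio $10/(10-n)$ holds for $n=8,9$ (and $n=7$) but must fail at $n=6$. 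I would justify this bound either by citing the arc literature or, since $q=8$ is small, by invoking the known classification of arcs in $\mathbb{P}^2(\mathbb{F}_8)$ to verify it directly.
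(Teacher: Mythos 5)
Your argument is correct, and it reaches the observation by a genuinely different route from the paper. The paper explains the ratios via the puncturing map $P_r:\mathcal{GRS}(q+2)\to\mathcal{MDS}(q+2-r)$: Lemma \ref{lem:regularity} shows each fibre has size $r!(q-1)^r$ (the key input being that a hyperconic is determined by any $q+2-r\ge 7$ of its points), Lemma \ref{lem:surj} together with the non-existence of complete $7$-, $8$- and $9$-arcs over $\mathbb{F}_8$ and Segre's classification of hyperovals gives surjectivity, and the value $\tfrac{q+2}{r}$ then drops out of the closed formulas of Theorems \ref{thm:countGRS} and \ref{thm:countGRS2}. You instead pass to unordered arcs, cancel the common factor $n!\,(q-1)^n/|GL_3(\mathbb{F}_q)|$, and compute the ratio as $U_n/U_n^{\mathrm{conic}}=\binom{10}{n}/\binom{9}{n}=10/(10-n)$ by parametrizing $n$-arcs by pairs (hyperconic, $n$-subset of its $10$ points) and conic-arcs by pairs (conic, $n$-subset of its $9$ points). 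The geometric inputs are identical in the two proofs --- uniqueness of the conic (resp.\ hyperconic) through at least $5$ (resp.\ $7$) points, extendability of $7$-, $8$- and $9$-arcs to hyperovals, and the fact that every hyperoval in $\mathbb{P}^2(\mathbb{F}_8)$ is a hyperconic --- but your version buys two things: it bypasses the explicit counts of Theorems \ref{thm:countGRS} and \ref{thm:countGRS2} entirely (the number of conics cancels), and it makes transparent both why the ratio is the integer $10/(10-n)$ and exactly where the argument breaks down (at $n=6$, which falls below Segre's threshold $q-\sqrt q+1$). The only point needing an external citation is the existence half of the extension claim, which you correctly isolate; the paper sources the same fact to Segre's book (no complete $n$-arcs over $\mathbb{F}_8$ for $n=7,8,9$), and this is equivalent to your bound $k>q-\sqrt q+1$ specialized to $q=8$.
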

The point of the observation is that it is somewhat unexpected that the ratio of the number of MDS and GRS codes is an integer in some cases. We will give an explanation of this in the following. For $3 \le n \le q+2$ let us denote by $\mathcal{MDS}(n)$ (resp. $\mathcal{GRS}(n)$) the set of $[n,3]$ MDS codes (resp. GRS codes) over $\mathbb{F}_q$. We now define the following map:
\begin{definition}
Let $q=2^e$ with $e \ge 3$ and choose $1 \le r \le q+1$. Then we define the map
$$
P_{r}: \mathcal{GRS}(q+2) \rightarrow \mathcal{MDS}(q+2-r)$$ by putting for $C \in \mathcal{GRS}(q+2)$ $$P_r(C):=\{(c_1,\dots,c_{q+2-r}) \, | \, \exists \, c_{q+2-r+1},\dots,c_{q+2} \ \makebox{s.t.} \ (c_1,\dots,c_{q+2}) \in C\}.
$$
\end{definition}

In words: $P_r$ punctures a code in the last $r$ coordinates. The reason that we only consider this map for even $q$ and for length $q+2$ GRS codes, is that in all other cases, a punctured GRS code is a GRS code again. We will later see that this is not true in the case we are considering. First we collect some properties of the map $P_r$ in the following:

\begin{lemma}\label{lem:regularity}
Suppose that $q+2-r \ge 7$, then the map $P_r$ is a $r!(q-1)^r$ to one map.
\end{lemma}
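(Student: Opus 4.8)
The plan is to fix a code $D$ in the image of $P_r$ and show that the fiber $P_r^{-1}(D)$ has exactly $r!(q-1)^r$ elements. First I would record that, since $q+2-r \ge 7 > 3$, puncturing the $[q+2,3]$ MDS code $C$ repeatedly keeps the dimension equal to $3$, so $D$ is a $[q+2-r,3]$ MDS code and the projection $\pi\colon C \to D$ that forgets the last $r$ coordinates is a linear isomorphism. Consequently, fixing once and for all a generator matrix $G_D$ of $D$, every $C \in P_r^{-1}(D)$ has a \emph{unique} generator matrix of the block form $[\,G_D \mid B\,]$ with $B \in \mathbb{F}_q^{3 \times r}$ (the rows are the unique codewords of $C$ that $\pi$ sends to the rows of $G_D$). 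The assignment $C \mapsto B$ is then a bijection between $P_r^{-1}(D)$ and the set of matrices $B$ for which $[\,G_D \mid B\,]$ generates a code in $\mathcal{GRS}(q+2)$, so it suffices to count these admissible $B$.

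Next I would translate this into geometry. Let $\mathcal{A} \subset \mathbb{P}^2(\mathbb{F}_q)$ be the $(q+2-r)$-arc formed by the projective points of the columns of $G_D$. The matrix $[\,G_D \mid B\,]$ generates a length-$(q+2)$ GRS code precisely when its $q+2$ columns are distinct representatives of the $q+2$ points of a hyperconic $\mathcal{H}$; equivalently, when the $r$ columns of $B$ are nonzero scalar multiples, in some order, of the $r$ points of $\mathcal{H} \setminus \mathcal{A}$ for some hyperconic $\mathcal{H} \supseteq \mathcal{A}$. For each such $\mathcal{H}$ there are $r!$ orderings of the missing points and $q-1$ nonzero scalars per column, giving $r!\,(q-1)^r$ admissible $B$. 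Hence the fiber size equals $r!\,(q-1)^r$ times the number of hyperconics containing $\mathcal{A}$, and the lemma reduces to showing that $\mathcal{A}$ extends to a \emph{unique} hyperconic.

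The heart of the argument, which I expect to be the main obstacle, is exactly this uniqueness. Since $D$ lies in the image of $P_r$, at least one hyperconic contains $\mathcal{A}$, so only an upper bound of one is needed. Suppose $\mathcal{H}_1 = \mathcal{K}_1 \cup \{n_1\}$ and $\mathcal{H}_2 = \mathcal{K}_2 \cup \{n_2\}$ both contain $\mathcal{A}$, where each $\mathcal{K}_i$ is a non-degenerate conic with nucleus $n_i$. Because a hyperconic is a conic together with a single extra point, at most one point of $\mathcal{A}$ can lie off $\mathcal{K}_i$, so $|\mathcal{A} \cap \mathcal{K}_i| \ge (q+2-r)-1$ for $i=1,2$. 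Inclusion–exclusion then yields $|\mathcal{A} \cap \mathcal{K}_1 \cap \mathcal{K}_2| \ge (q+2-r)-2 \ge 5$, invoking the hypothesis $q+2-r \ge 7$. Since two distinct non-degenerate conics meet in at most four points, this forces $\mathcal{K}_1 = \mathcal{K}_2$; the nucleus of a conic in even characteristic being unique, it follows that $n_1 = n_2$ and hence $\mathcal{H}_1 = \mathcal{H}_2$. This is precisely the point at which the bound $q+2-r \ge 7$ is indispensable (with only $q+2-r=6$ the two conics could share all four common points and the extension need not be unique), and it completes the count: every nonempty fiber of $P_r$ has exactly $r!\,(q-1)^r$ elements.
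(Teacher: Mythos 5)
Your proposal is correct and follows essentially the same route as the paper: both reduce the fiber count to choosing the matrix $B$, and both hinge on the uniqueness of the hyperconic through the $(q+2-r)$-arc, proved by deleting the (at most two) nuclei to get two conics sharing at least $q-r\ge 5$ points and concluding they coincide (the paper cites Theorem~\ref{Castelnuovo} with $k=3$ where you cite the four-point bound on conic intersections, which is the same fact). The only cosmetic difference is that you organize the argument around a fixed code in the image rather than around $C_r=P_r(C)$.
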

\begin{proof}
Let $C \in \mathcal{GRS}(q+2)$ and denote by $G_r$ denote a generator matrix of the code $C_r:=P_r(C)$. We wish to determine the number of possibilities for $D \in \mathcal{GRS}(q+2)$ such that $P_r(D)=C_r$. Equivalently we wish to describe the codes $D\in \mathcal{GRS}(q+2)$ that have a generator matrix of the form $(G_r \, B)$, with $B$ a $3 \times r$ matrix.

The $q+2-r$ columns of $G_r$ give rise to $q+2-r$ projective points in $\mathbb{P}^2(\mathbb{F}_q)$ lying on a hyperconic. We claim that this hyperconic is unique: indeed if two hyperconics have at least $q+2-r$ points in common, then discarding the nuclei of these hyperconics, one obtains two conics having at least $q-r$ points in common. Since by assumption $q-r \ge 5$ these conics are identical by Theorem \ref{Castelnuovo} applied to the case $k=3$. This implies that the original hyperconics were identical as well.

This shows that if $P_r(D)=C_r$ and $D$ has generator matrix $(G_r \, B)$, then the set of $r$ projective points that the columns of $B$ give rise to, is uniquely determined. This leaves $r! (q-1)^r$ possibilities for $B$, since each projective point has $q-1$ representatives and the ordering of the $r$ representatives as columns of $B$ can be chosen freely.
\end{proof}

An $n$-arc is called complete, if it cannot be extended to a $(n+1)$-arc. If $q$ is odd, it is for example known that no complete $q$-arcs exist \cite{Hirschfeld1998}. Similarly if $q$ is even, no complete $(q+1)$-arcs exist \cite[Ch.8]{Hirschfeld1998}. The concept of completeness is instrumental in explaining Observation \ref{obs:int} and plays a role in the following:

\begin{lemma}\label{lem:surj}
Suppose that any hyperoval is a hyperconic and moreover that for $n = q+2-r,\dots,q+1$ any $n$-arc can be extended to an $(n+1)$-arc. Then the map $P_r$ is surjective.
\end{lemma}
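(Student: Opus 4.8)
The plan is to prove surjectivity directly: given an arbitrary target code $C' \in \mathcal{MDS}(q+2-r)$, I would construct an explicit preimage $C \in \mathcal{GRS}(q+2)$ with $P_r(C) = C'$, building $C$ from a hyperconic that extends the arc attached to $C'$. First I would fix a generator matrix $G'$ of $C'$. Because $C'$ is a $[q+2-r,3]$ MDS code, every $3 \times 3$ minor of $G'$ is nonzero, so the $q+2-r$ columns of $G'$ represent pairwise distinct projective points of which no three are collinear; that is, they form a $(q+2-r)$-arc $\mathcal{A} \subset \mathbb{P}^2(\mathbb{F}_q)$.

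Next I would apply the extension hypothesis repeatedly. Since every $n$-arc with $n = q+2-r, \dots, q+1$ extends to an $(n+1)$-arc, I can produce a chain $\mathcal{A} = \mathcal{A}_{q+2-r} \subset \mathcal{A}_{q+3-r} \subset \cdots \subset \mathcal{A}_{q+2}$ in which each $\mathcal{A}_{n+1}$ is an $(n+1)$-arc containing $\mathcal{A}_n$. The top of the chain $\mathcal{H} := \mathcal{A}_{q+2}$ is a $(q+2)$-arc, i.e.\ a hyperoval, and it contains the original arc $\mathcal{A}$. By the first hypothesis, $\mathcal{H}$ is a hyperconic.

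Finally I would assemble the preimage. Listing the $r$ points of $\mathcal{H} \setminus \mathcal{A}$ and choosing any representatives $b_1, \dots, b_r \in \mathbb{F}_q^3$, I would set $B = [\,b_1 \, \cdots \, b_r\,]$, form the matrix $G = [\,G' \mid B\,]$, and let $C$ be the code it generates. The columns of $G$ are representatives of the $q+2$ points of the hyperconic $\mathcal{H}$, so by the definition of length-$(q+2)$ GRS codes adopted in this section, $C \in \mathcal{GRS}(q+2)$. Moreover a codeword of $C$ has the form $vG = (vG',\, vB)$ for a row vector $v \in \mathbb{F}_q^3$, so deleting the last $r$ coordinates returns the row space of $G'$, which is exactly $C'$; hence $P_r(C) = C'$, and $P_r$ is surjective.

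The construction is a straightforward geometric packaging once the two hypotheses are granted, so the only delicate point is bookkeeping: I must keep the first $q+2-r$ columns of $G$ literally equal to $G'$ (not merely projectively equivalent) so that $P_r(C)$ equals $C'$ on the nose, and I must confirm at the outset that $\mathcal{A}$ really is an arc of size $q+2-r$ so that the extension hypothesis applies already at the first step. The substantive input—extendability of arcs all the way up to a hyperoval, and the identification of that hyperoval as a hyperconic—is supplied precisely by the two standing assumptions, so no further difficulty arises.
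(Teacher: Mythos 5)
Your proposal is correct and follows essentially the same route as the paper: extend the $(q+2-r)$-arc underlying the given MDS code step by step to a hyperoval, invoke the hypothesis that it is a hyperconic, and append the corresponding columns to obtain a length-$(q+2)$ GRS preimage. The paper states this more tersely, while you make the bookkeeping (keeping the first $q+2-r$ columns literally equal to $G'$) explicit, which is a welcome clarification but not a different argument.
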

\begin{proof}
From the assumptions, we see that any $n$-arc with $q+2-r \le n \le q+1$ can be extended. Inductively, this means that any $(q+2-r)$-arc can be extended to a hyperoval, which in turn was assumed to be a hyperconic. This implies that any $[q+2-r,3]$ MDS code can be extended to a $[q+2,3]$ GRS code. A direct consequence is that the map $P_r$ is surjective.
\end{proof}

We are now ready to explain Observation \ref{obs:int}.

\begin{proposition}
Let $q=8$ and $r=1,2,3$. Then we have
$$\dfrac{|\mathcal{MDS}(q+2-r)|}{|\mathcal{GRS}(q+2-r)|}=\dfrac{q+2}{r}.$$
\end{proposition}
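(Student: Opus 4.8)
The plan is to combine the counting result of Lemma \ref{lem:regularity} with the surjectivity statement of Lemma \ref{lem:surj}, using the specific arithmetic of $q=8$. The strategy is to relate the three cardinalities $|\mathcal{GRS}(q+2)|$, $|\mathcal{GRS}(q+2-r)|$, and $|\mathcal{MDS}(q+2-r)|$ through the map $P_r$ and its restriction to GRS codes, and then to verify that the arc-extension hypotheses of Lemma \ref{lem:surj} actually hold for the relevant values of $n$ when $q=8$.

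First I would observe that the map $P_r$ restricts to a map $\mathcal{GRS}(q+2) \to \mathcal{GRS}(q+2-r)$, since puncturing a length-$(q+2)$ GRS code (a hyperconic) in $r$ coordinates yields a $(q+2-r)$-arc lying on the underlying conic, hence a GRS code. For $q=8$ and $r \in \{1,2,3\}$ we have $q+2-r \in \{7,8,9\} \ge 7$, so Lemma \ref{lem:regularity} applies: $P_r$ is exactly $r!(q-1)^r$ to one, and the same fiber-counting argument (the unique-hyperconic claim only used that the punctured points determine the conic) shows the \emph{restricted} map to $\mathcal{GRS}(q+2-r)$ is also $r!(q-1)^r$ to one. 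Counting the domain $\mathcal{GRS}(q+2)$ two ways therefore gives
\beq
|\mathcal{GRS}(q+2)| = r!(q-1)^r \cdot |\mathcal{GRS}(q+2-r)|,
\eeq
and this identity can be checked directly against the formulas of Theorems \ref{thm:countGRS} and \ref{thm:countGRS2}.

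Next I would verify that Lemma \ref{lem:surj} applies for $q=8$ and $r \in \{1,2,3\}$. Its hypotheses require that every hyperoval be a hyperconic — true for $q=8$ by Segre's result quoted earlier — and that every $n$-arc extend to an $(n+1)$-arc for $n = q+2-r,\dots,q+1 = 9,10,11$, i.e.\ for all $n$ in $\{7,\dots,11\} \cap [q+2-r, q+1]$. Completeness results cited in the paper guarantee no complete $(q+1)$-arcs exist for even $q$, handling $n=q+1=9$; for the smaller values one needs the corresponding extendability of $7$- and $8$-arcs over $\mathbb{F}_8$, which is where I would be most careful. Granting the hypotheses, $P_r : \mathcal{GRS}(q+2) \to \mathcal{MDS}(q+2-r)$ is surjective, and combining surjectivity with the $r!(q-1)^r$-to-one regularity from Lemma \ref{lem:regularity} yields
\beq
|\mathcal{MDS}(q+2-r)| = \frac{|\mathcal{GRS}(q+2)|}{r!(q-1)^r}.
\eeq
Dividing this by the displayed identity for $|\mathcal{GRS}(q+2-r)|$ collapses the $r!(q-1)^r$ factors and gives $|\mathcal{MDS}(q+2-r)|/|\mathcal{GRS}(q+2-r)| = |\mathcal{GRS}(q+2)|/(r!(q-1)^r \cdot |\mathcal{GRS}(q+2-r)|)$; but the latter denominator equals $|\mathcal{GRS}(q+2)|$, so the ratio simplifies and I would finally substitute the explicit formula $\gamma_{GRS}(3,q+2) = (q+2)(q-1)^{q+1}(q-2)!$ together with $\gamma_{GRS}(3,q+2-r)$ to extract the clean value $(q+2)/r$.

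The main obstacle I anticipate is arithmetic bookkeeping combined with the verification that the arc-extension hypothesis of Lemma \ref{lem:surj} genuinely holds for $q=8$ at the small lengths $n=7,8$; the completeness literature cited dispenses with $n=q+1$ cleanly, but the lower lengths must be confirmed (either from the same references or from the explicit MDS-count formulas in the table, which already encode that every such arc is a punctured hyperconic for $q=8$). Once surjectivity is secured, the algebraic simplification to $(q+2)/r$ is a direct substitution using Theorems \ref{thm:countGRS} and \ref{thm:countGRS2}, and specializing $r=1,2,3$ recovers precisely the ratios $10$ and $5$ recorded in Observation \ref{obs:int}.
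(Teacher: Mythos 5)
Your overall skeleton (fiber count from Lemma \ref{lem:regularity}, surjectivity from Lemma \ref{lem:surj} plus non-existence of complete $7$-, $8$- and $9$-arcs over $\mathbb{F}_8$, then division by the explicit GRS counts) matches the paper's argument, and your caution about verifying extendability at the small lengths is well placed --- the paper settles it by citing Segre. However, there is a genuine error in your first step: $P_r$ does \emph{not} restrict to a map $\mathcal{GRS}(q+2)\to\mathcal{GRS}(q+2-r)$. A length-$(q+2)$ GRS code corresponds to a conic plus its nucleus; if the punctured coordinates are all conic points, the surviving $(q+2-r)$-arc still contains the nucleus together with $q+1-r\ge 5$ points of the conic, and by Theorem \ref{Castelnuovo} (with $k=3$) the only conic through those $q+1-r$ points is the original one, which does not contain the nucleus. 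So that punctured code is not GRS --- this is exactly the phenomenon the paper flags just before Lemma \ref{lem:regularity} (``a punctured GRS code is a GRS code again \dots\ this is not true in the case we are considering''). Consequently your identity $|\mathcal{GRS}(q+2)| = r!(q-1)^r\cdot|\mathcal{GRS}(q+2-r)|$ is false: by Theorems \ref{thm:countGRS} and \ref{thm:countGRS2} the left side equals $(q+2)(q-1)^{q+1}(q-2)!$ while the right side equals $r\,(q-1)^{q+1}(q-2)!$, so the two sides differ by precisely the factor $(q+2)/r$ you are trying to produce. Indeed, if your identity held, your own final display would force the ratio to be $1$ rather than $(q+2)/r$.

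The repair is to drop the restricted map entirely: Lemma \ref{lem:regularity} together with surjectivity gives $|\mathcal{MDS}(q+2-r)| = |\mathcal{GRS}(q+2)|/(r!(q-1)^r)$ directly, and dividing by $|\mathcal{GRS}(q+2-r)|$ as computed from Theorem \ref{thm:countGRS} yields $(q+2)/r$; this is the paper's route. Also note a small slip in your list of lengths: for $q=8$ the arcs that must be extendable have $n=q+2-r,\dots,q+1\in\{7,8,9\}$, not $\{9,10,11\}$.
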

\begin{proof}
First of all note that by Lemma \ref{lem:regularity} (which we may apply, since $q=8$ and $r\le 3$) we have that
\begin{equation}\label{eq:cardim}
|P_r(\mathcal{GRS}(q+2))|=\dfrac{|\mathcal{GRS}(q+2)|}{r!(q-1)^r}
\end{equation}
From Lemma \ref{lem:surj} and the fact that there do not exist complete $n$-arcs over $\mathbb{F}_8$ with $n=7,8$ or $9$ (pages 285 and 290 in \cite{segre}), we may conclude that $P_r(\mathcal{GRS}(q+2))=\mathcal{MDS}(q+2-r).$ Combining this with \eqref{eq:cardim} and Theorems \ref{thm:countGRS} and \ref{thm:countGRS2}, we see that
$$\dfrac{|\mathcal{MDS}(q+2-r)|}{|\mathcal{GRS}(q+2-r)|}=\dfrac{|\mathcal{GRS}(q+2)|}{r!(q-1)^r\cdot |\mathcal{GRS}(q+2-r)|}=\dfrac{q+2}{r}.$$
\end{proof}



\end{document}